\definecolor{Pink}{rgb}{1.0, 0.5, 0.5}
\definecolor{Maroon}{rgb}{0.8, 0.0, 0.0}
\def\boxit#1{\vbox{\hrule\hbox{\vrule\kern6pt\vbox{\kern6pt#1\kern6pt}\kern6pt\vrule}\hrule}}
\newcommand{\B}{\mathbf}
\newtheorem{theorem}{Theorem}
\newtheorem{lemma}[theorem]{Lemma}
\newtheorem{corollary}[theorem]{Corollary}
\newtheorem{assumption}{Assumption}
\newcommand{\be}{\mbox{\bf e}}
\newcommand{\balpha}{\mbox{\boldmath $\alpha$}}
\newcommand{\bDelta}{\mbox{\boldmath $\Delta$}}
\newcommand{\bbeta}{\mbox{\boldmath $\beta$}}
\newcommand{\bSig}{\mbox{\boldmath $\Sigma$}}
\newcommand{\bPsi}{\mbox{\boldmath $\Psi$}}
\newcommand{\bftab}{\fontseries{b}\selectfont}
\def\beqn{\begin{eqnarray}}
\def\eeqn{\end{eqnarray}}
\def\beqns{\begin{eqnarray*}}
\def\eeqns{\end{eqnarray*}}
\def\0{{\bf 0}}
\def\A{{\bf A}}
\def\b{{\bf b}}
\def\e{{\bf e}}
\def\I{{\bf I}}
\def\S{{\bf S}}
\def\s{{\bf s}}
\def\v{{\bf v}}
\def\W{{\bf W}}
\def\w{{\bf w}}
\def\X{{\bf X}}
\def\x{{\bf x}}
\def\y{{\bf y}}
\def\Z{{\bf Z}}
\def\z{{\bf z}}
\def\1{{\bf 1}}
\def\J{{\bf J}}
\def\trans{^{\rm T}}
\def\strans{^{*\rm T}}
\newcommand{\beq}{\begin{equation}}
\newcommand{\eeq}{\end{equation}}
\newcommand{\bes}{\begin{eqnarray*}}
\newcommand{\ees}{\end{eqnarray*}}
\newcommand{\bi}{\begin{itemize}}
\newcommand{\ei}{\end{itemize}}
\newcommand{\bPhi}{\boldsymbol{\Phi}}
\begin{document}

\title{Log-Contrast Regression with Functional Compositional Predictors: Linking Preterm Infant's Gut Microbiome Trajectories to Neurobehavioral Outcome}

\author{Zhe Sun$^1$, Wanli Xu$^2$, Xiaomei Cong$^2$, Gen Li$^3$, Kun Chen$^1$\thanks{Corresponding author; kun.chen@uconn.edu}\\
$^1$\textit{Department of Statistics, University of Connecticut}\\
$^2$\textit{School of Nursing, University of Connecticut, Storrs, CT}\\
$^3$\textit{Department of Biostatistics, Columbia University}
}

\date{}
\maketitle

\begin{abstract}

			The neonatal intensive care unit (NICU) experience is known to be one of the most crucial factors that drive preterm infant's neurodevelopmental and health outcomes. It is hypothesized that stressful early life experience of very preterm neonate is imprinting gut microbiome by the regulation of the so-called brain-gut axis, and consequently, certain microbiome markers are predictive of later infant neurodevelopment. To investigate, a preterm infant study was conducted; infant fecal samples were collected during the infants' first month of postnatal age, resulting in functional compositional microbiome data, and neurobehavioral outcomes were measured when infants reached 36–38 weeks of post-menstrual age. To identify potential microbiome markers and estimate how the trajectories of gut microbiome compositions during early postnatal stage impact later neurobehavioral outcomes of the preterm infants, we innovate a sparse log-contrast regression with functional compositional predictors. The functional simplex structure is strictly preserved, and the functional compositional predictors are allowed to have sparse, smoothly varying, and accumulating effects on the outcome through time. Through a pragmatic basis expansion step, the problem boils down to a linearly constrained sparse group regression, for which we develop an efficient algorithm and obtain theoretical performance guarantees. Our approach yields insightful results in the preterm infant study. The identified microbiome markers and the estimated time dynamics of their impact on the neurobehavioral outcome shed light on the linkage between stress accumulation in early postnatal stage and neurodevelopmental process of infants.\\

  \noindent KEY WORDS: Constrained optimization; Longitudinal data; Simplex; Group selection.

\end{abstract}
\doublespace

\section{Introduction}

Over the past decade, advances in neonatal care have contributed to a dramatic increase in survival among very preterm birth infants (born before 32 weeks' gestation) from 15\% to over 90\% \citep{Fanaroff2003, Stoll2010}. With this cheerful gain in survival, recent research has shifted focus to the investigation of the increase in neurological morbidity and long-term adverse outcomes related to immature neuro-immune systems and stressful early life experience \citep{Mwaniki2012}. In particular, the neonatal intensive care unit (NICU) experience is found to be one of the most crucial factors that drive preterm infant neurodevelopmental and health outcomes. Accumulated infant stress at NICU arises from numerous causes, such as repeated painful procedures, daily clustered care, maternal separation, among others. \citet{Mwaniki2012} showed that these neonatal insults were associated with a much escalated risk of long-term neurological morbidity, e.g., 39.4\% of NICU survivors had at least one neurodevelopmental deficit. However, the onset of the altered neuro-immune progress induced by infant stress/pain is often insidious, and the mechanism of this association, which holds the key for reducing costly health consequences of prematurity, remain largely unclear. Expanding research evidence supports that a functional communication exists between the central nervous system and gastrointestinal tract, the brain-gut axis, in which the gut microbiome plays a key role in early programming and later responsivity of the stress system \citep{Dinan2012}.


As such, a central hypothesis is that the stressful early life experience of very preterm neonates is imprinting gut microbiome by the regulation of the brain-gut axis, and consequently, certain microbiome markers are predictive of later infant neurodevelopment. To investigate, a study was conducted in a NICU in the northeast of the U.S., where stable preterm infants were recruited. Infant fecal samples were collected daily when available, during the infant's first month of postnatal age. 
Bacterial DNA were isolated and extracted from each stool sample, and through sequencing and processing, resulted in gut microbiome data.
Gender, delivery type, birth weight, feeding type, among others, were also recorded for each infant. Infant neurobehavioral outcomes were measured when the infant reached 36--38 weeks of post-menstrual age, using the NICU Network Neurobehavioral Scale (NNNS). More details on the study and the data are provided in Section \ref{sec:data}. The above scientific hypothesis can then be approached through a statistical analysis, by examining how the microbiome compositions collected over the early postnatal period predict or impact on the later NNNS score, after adjusting for the effects of relevant infant characteristics.

The gut microbiome data were processed and operationalized as compositions, as commonly done in the microbiome literature \citep{Bomar2011,xiaomei2017}. Compositional data analysis is not an unfamiliar territory to statisticians. Data consisting of  percentages or proportions of certain composition are commonly encountered in various scientific fields including ecology, biology and geology. One unique attribute of compositional data is the unit-sum constraint, i.e., the components of a composition are non-negative and always sum up to one; this entails that the data live in a simplex and thus renders many statistical methods that comply with Euclidean geometry inapplicable. Much foundational work on the statistical treatment of compositional data was done by John Aitchison \citep{Aitchison1982, Aitchison1984}; see \citet{AITCHISON2003} for a thorough survey on the subject. Of particular interest to us is regression with compositional predictors, for which the \textit{log-contrast models} \citep{Aitchison1984} have been very popular. A prominent feature of the model is that it enables the regression analysis to obey the so-called principle of subcompositional coherence, i.e., the compositional data should be analyzed in a way that the same results can be obtained regardless of whether we analyze the entire composition or only a subcomposition \citep{Aitchison2005}. Recently, \citet{Lin2014} studied a sparse linear regression model with compositional covariates, extending the log-contrast model to high dimensions. The problem was nicely formulated as a constrained lasso regression \citep{tib1996}, with a zero-sum linear constraint on the regression coefficients. \citet{Shi2016} further extended the sparse regression model to the case of multiple linear constraints for the analysis of microbiome subcompositions, and a de-biased procedure was adopted to obtain an asymptotically unbiased estimator of the regression coefficients and its asymptotic distribution. See \citet{LiH2015} for a recent comprehensive review on microbiome compositional data analysis. However, to our knowledge, regression method on handling high-dimensional compositional trajectories or series is still lacking.





Motivated by the needs in identifying potential microbiome markers and estimating how the trajectories of microbiome compositions along early postnatal stage impact later neurobehavioral outcome, we propose a \textit{sparse log-contrast regression model with functional compositional predictors}. 
In our approach detailed in Section \ref{sec:method}, longitudinal microbial compositions are treated as functional compositional predictors, with time-varying effects on the outcome. We build a scalar-on-function regression model for the log-transformed predictors, which naturally connects to the log-contrast regression. We particularly focus on the identification of important microbes using a sparsity-inducing regularization method.
Section \ref{sec:est} concerns the computational issues. Some theoretical properties of the proposed estimator that are of practical concern are discussed in Section \ref{sec:th}. In Section \ref{sec:sim}, simulation studies showcase the superior performance of the proposed approach over several competing methods. The data analysis of the preterm infant study is presented in Section \ref{sec:app}. The identified microbiome markers are justifiable based on existing literature, and the estimated dynamic trajectories of their impact on the outcome shed new lights on the functional linkage between the accumulation of prenatal stress and neurodevelpoment of infants. Some concluding remarks are given in Section \ref{sec:dis}.

\section{Preterm Infant Study and Problem Setup}\label{sec:data}

Data were collected at a Level IV NICU in the northeast region of the U.S. (Level IV NICUs provide the highest level, the most acute care.) Fecal samples of preterm infants were collected daily when available, mainly during the infant’s postnatal age (PNA) of 5 to 28 days ($t\in [5,28]$). Bacterial DNA were isolated and extracted from each stool sample \citep{Bomar2011,xiaomei2017}; the V4 regions of the 16S rRNA gene were sequenced using the Illumina platform and clustered and analyzed using QIIME \citep{xiaomei2017}, resulting in microbiome count data.
Since the number of sequencing reads varied a lot across samples, we further normalize the data by calculating the ratio of each microbe in each sample. As a result, we obtain a compositional data matrix. To conduct log transformation in our model, following the convention in the literature, we replace zeros by the maximum rounding error (i.e., 0.5) to avoid singularity \citep{AITCHISON2003, Lin2014}.
Due to the limited sample size, we mainly focus on $p=22$ categories at the order level of the taxonomic ranks as a proof of concept. (We also perform a confirmative analysis at the genus level which has more than 60 categories.) Taxonomic rank is the relative level of a group of organisms in a taxonomic hierarchy in biological classification; the major ranks are species, genus, family, order, class, phylum, kingdom, and domain. In this study, infants with less than 5 fecal samples were excluded, which resulted in $n=34$ infants. There were totally 414 fecal samples, so the average number of daily fecal samples collected for each infant was 12.2. Figure \ref{fig:data}(a) shows the histogram of the number of samples collected from each infant, and Figure \ref{fig:data}(b)--(d) show some examples of the observed profile of the time-varying compositions along the postnatal age.


\begin{figure}[h!]
    \captionsetup[subfigure]{singlelinecheck=false}
    \centering
    \begin{subfigure}{0.42\textwidth}
        \includegraphics[width=0.9\linewidth
        ]{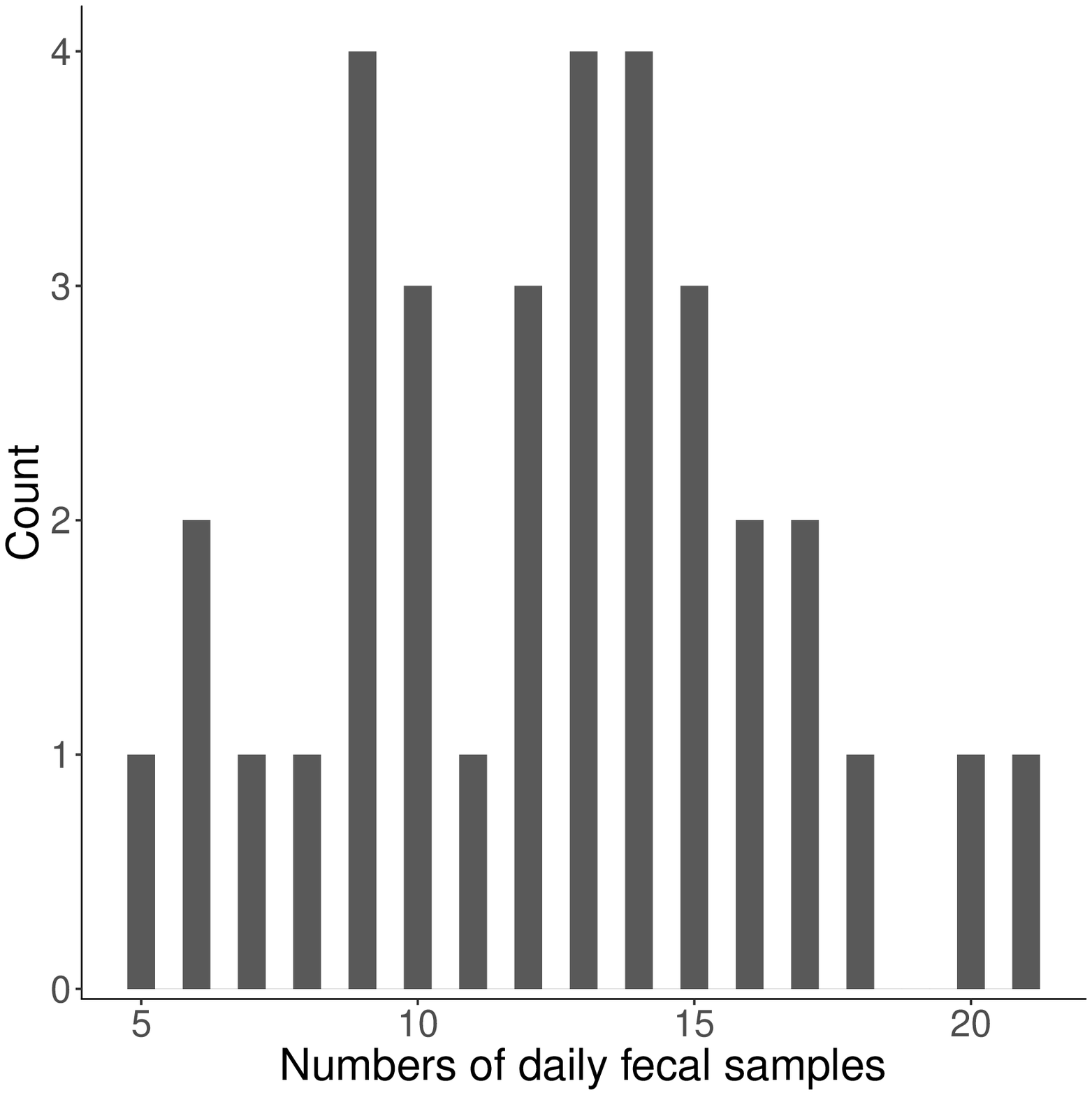}\caption{} 
    \end{subfigure}
    \begin{subfigure}{0.42\textwidth}
        \includegraphics[width=1\linewidth
        ]{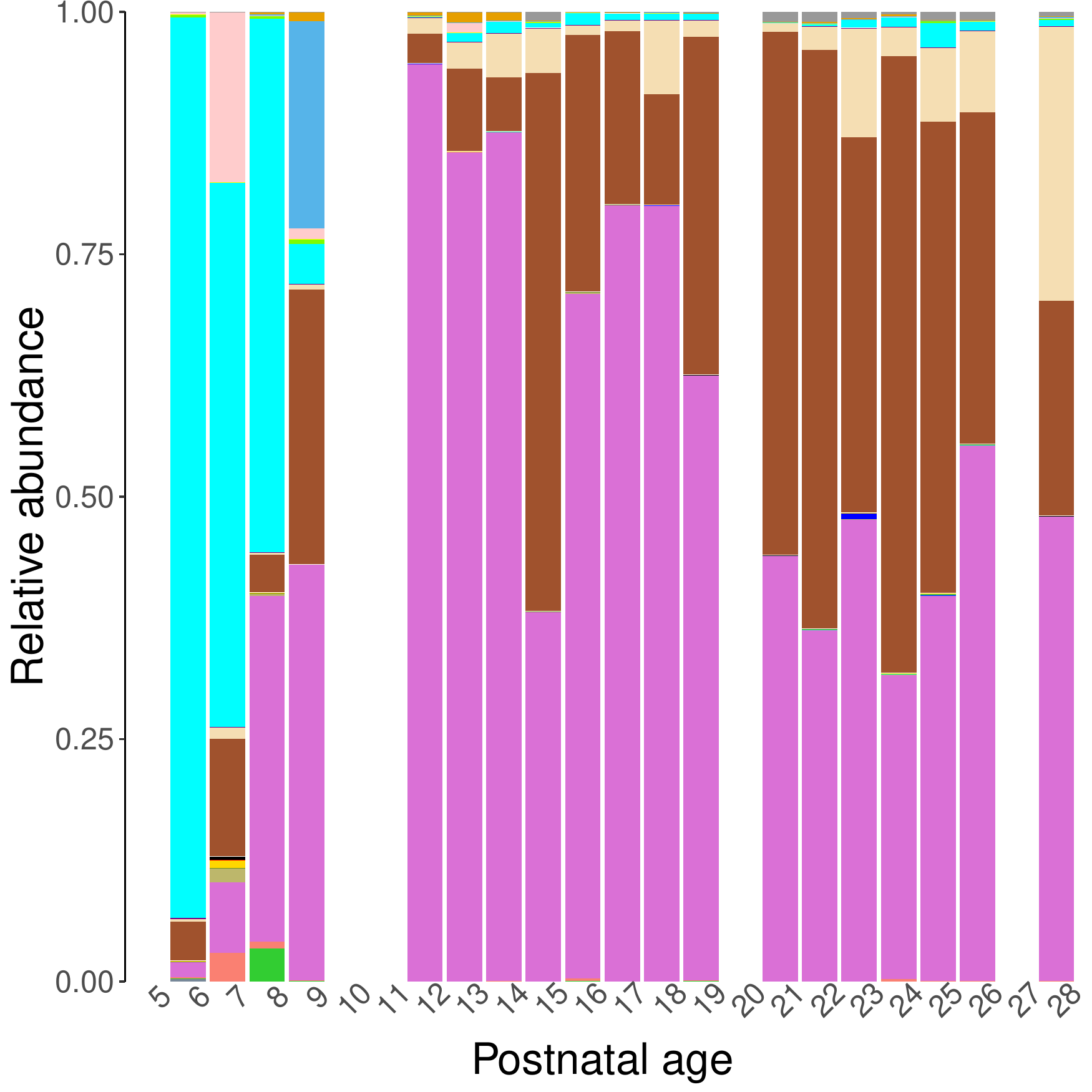}\caption{} 
    \end{subfigure}\\
    \begin{subfigure}{0.42\textwidth}
        \includegraphics[width=1\linewidth
        ]{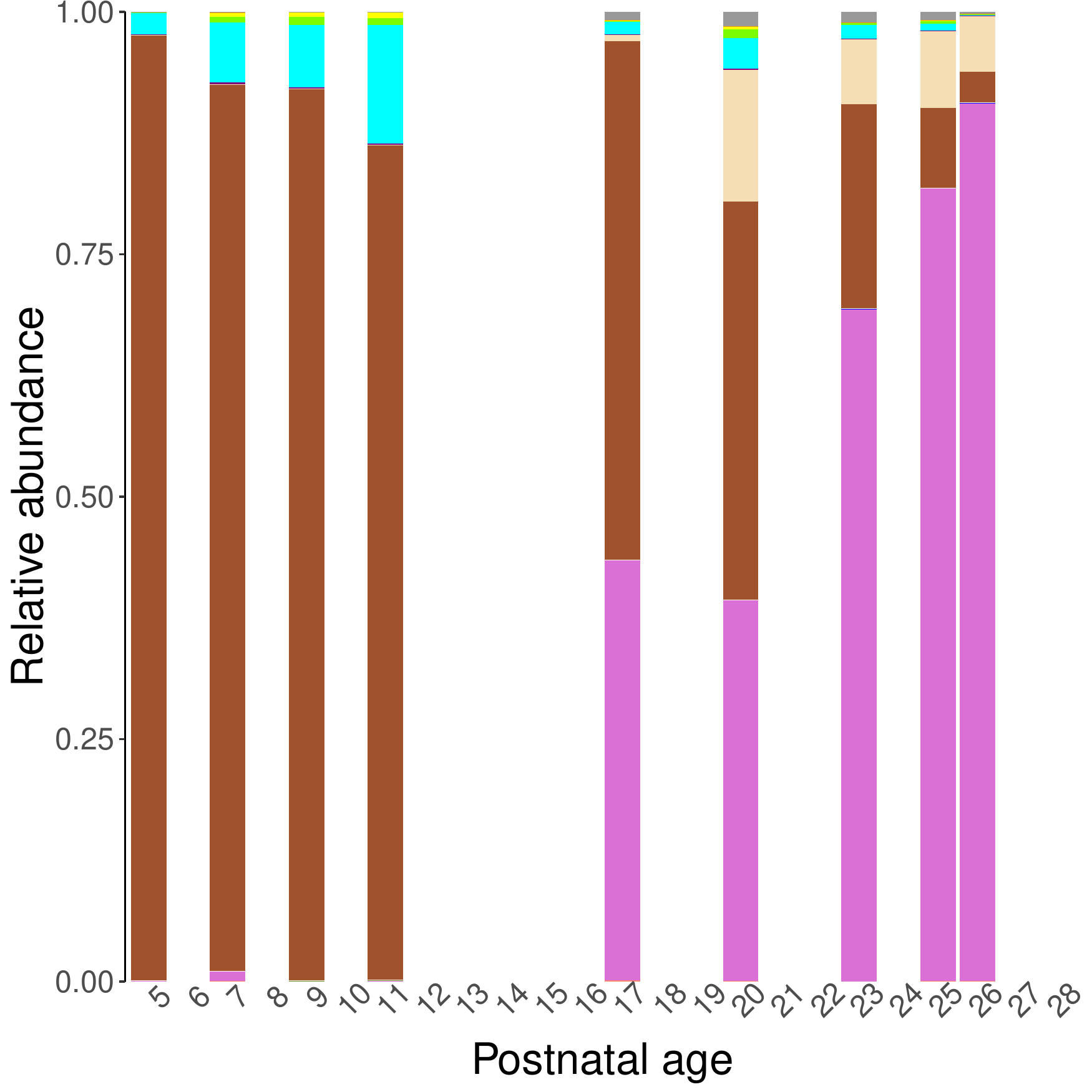}\caption{}
              \end{subfigure}
        \begin{subfigure}{0.42\textwidth}
        \includegraphics[width=1\linewidth
        ]{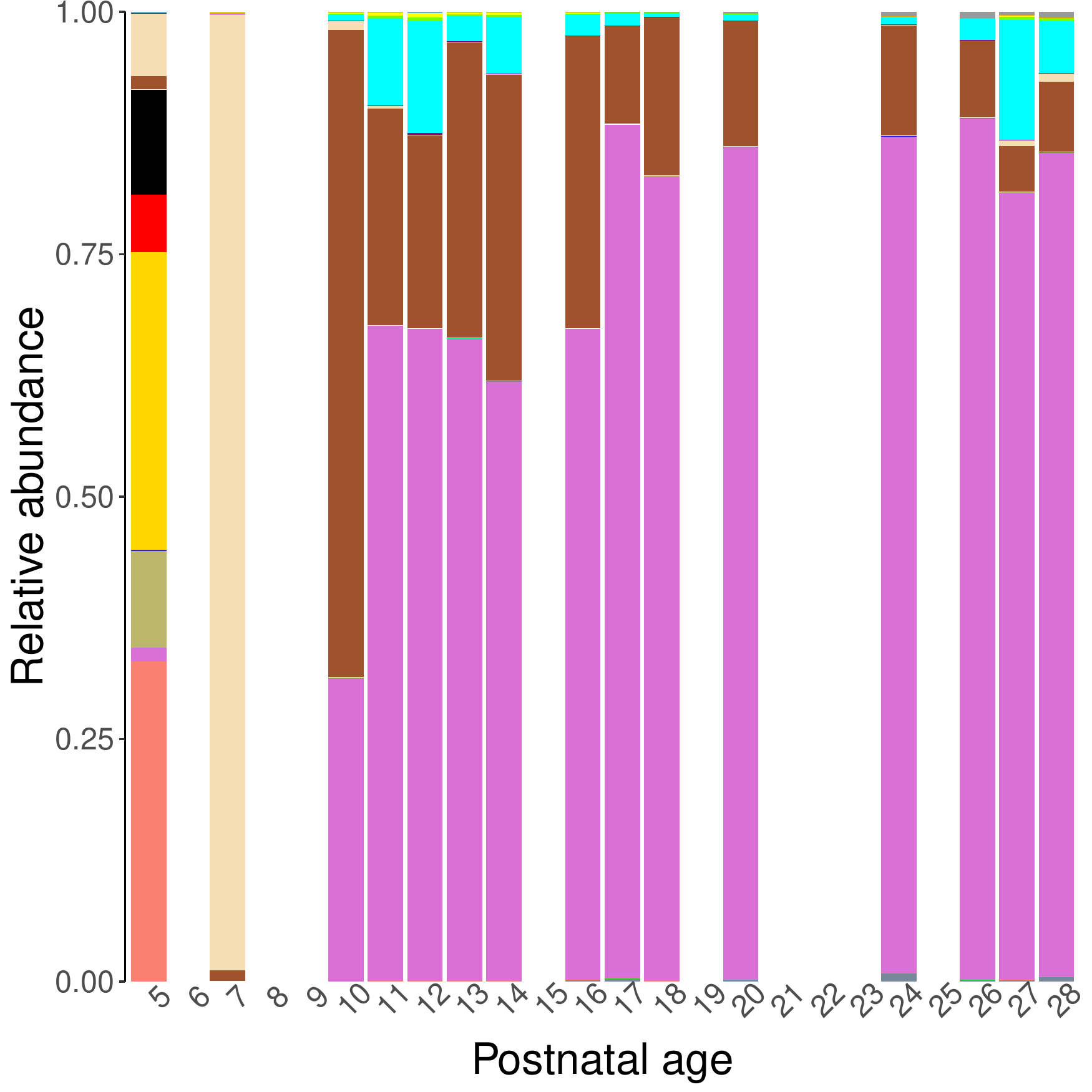}\caption{}
    \end{subfigure}
    \begin{minipage}[t]{1\textwidth}
        \includegraphics[width=\linewidth]{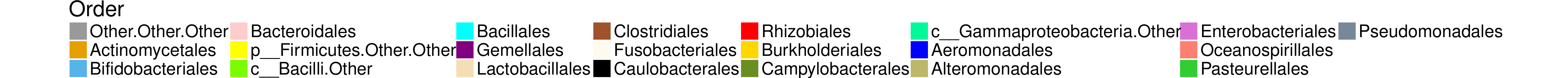}
    \end{minipage}
    \caption{(a) Histogram of the number of samples collected from each infant. (b)--(d) Example profiles of time-varying compositional data along postnatal age.} \label{fig:data} 
\end{figure}


Infant neurobehavioral outcomes were measured when the infant reached 36--38 weeks of post-menstrual age or prior to hospital discharge, using the NICU Network Neurobehavioral Scale (NNNS). The NNNS is a standardized assessment of neonatal neurobehavioral outcomes that provides an appraisal of neurological integrity and behavioral function of the normal and at-risk/preterm infant. In particular, the Stress/Abstinence subscale (NSTRESS) measures signs of stress and includes 50 items. Each sign of stress/abstinence is scored as present or absent, and the composite NSTRESS score ranges between 0 and 1. A higher NSTRESS score demonstrates a more stressful behavioral performance. \citet{xiaomei2017} showed that the composite NSTRESS score is positively associated with painful/stressful experience in preterm infants. Other variables about birth and characteristics of infant included gender, delivery type, premature rupture of membranes (PROM), score for Neonatal Acute Physiology--Perinatal Extension-II (SNAPPE-II), birth weight, and percentage of feeding with mother's breast milk (\%MBM).


To formulate the statistical problem, let $\y = [y_1,\ldots, y_n]\trans \in \mathbb{R}^n$ be consisting of the observed neurobehavioral outcomes of the preterm infants, i.e., their NNNS scores. Let $\x_i(t)= [x_{i1}(t),\ldots,x_{ip}(t)]\trans \in \mathbb{S}^{p-1}$ be the gut microbiome compositions from the $i$th infant at time $t$. Here we let $\mathbb{S}^{p-1} = \{[x_1,\ldots,x_p]\trans\in\mathbb{R}^p; x_{j}>0, \sum_{j=1}^{p}x_j=1.\}$, to denote the $(p-1)$-dimensional positive simplex lying in $\mathbb{R}^p$. Let $\X(t) = [\x_1(t),\ldots, \x_n(t)]\trans\in \mathbb{R}^{n\times p}$ be the matrix of the functional predictors at time $t$. The observed gut microbiome compositions during the early postnatal period can then be viewed as discrete observations from $\X(t)$. Also define  $\Z_c\in \mathbb{R}^{n\times p_c}$, formed by data from the aforementioned time-invariant infant characteristics, e.g., gender, delivery type, among others. 

As the main objective is to identify the microbiome markers that are predictive of later infant neurodevelopment, we need to perform a regression analysis to examining how the outcome $y$, the NNNS score, is associated with $\X(t)$, the gut micorbiome trajectories, while controlling for the infant characteristics collected in $\Z_c$. The fact that $\X(t)$ is both functional and compositional makes the problem very challenging.





\section{Regression with Functional Compositional Predictors}\label{sec:method}

\subsection{Linear Log-Contrast Model}\label{sec:method:llc}



We first briefly review the existing regression approaches for dealing with a single set of compositional predictors. Suppose we observed $n$ independent observations of a response variable $y_i\in \mathbb{R}$ and a compositional predictor $\x_i= [x_{i1},\ldots,x_{ip}]\trans$ such that $\x_i \in \mathbb{S}^{p-1}$. Denote $\y = [y_1,\ldots, y_n]\trans \in \mathbb{R}^n$ as the response vector and
$\X = [\x_1,\ldots, \x_n]\trans \in \mathbb{R}^{n\times p}$
as the design matrix.

Ignoring the simplex structure of $\X$ would lead to parameter identifiablity issue in the linear regression of $\y$ on $\X$. One naive ``remedy'' is to exclude an arbitrary component of the compositional vector in the regression, which, however, leads to a method that is not invariant to the choice of the removed component since it affects both of prediction and selection and consequently makes proper model interpretation and inference difficult. Ever since the pioneer work by John Aitchison \citep{Aitchison1982,Aitchison1984,AITCHISON2003} on the statistical treatments of compositional data, the so-called \textit{log-contrast model} has gained much popularity in a variety of regression problems with compositional predictors. The main idea is to perform a log-ratio transformation of the compositional data, such that the transformed data admit the familiar Euclidean geometry in $\mathbb{R}^{p-1}$. Specifically, for each $i=1,\ldots, n$, let $\widetilde{z}_{ij} = \log(x_{ij}/x_{ir})$, where $r \in \{1,\ldots, p\}$ is a chosen reference level, and $j = 1,\ldots,r-1,r+1,\ldots, p$, resulting in $\widetilde{\Z}_{\bar{r}} = [\widetilde{z}_{ij}]\in \mathbb{R}^{n\times (p-1)}$. Also define $z_{ij} = \log(x_{ij})$ and $\Z = [z_{ij}] \in \mathbb{R}^{n\times p}$. The linear log-contrast regression model is expressed as
\begin{align}\label{eq:model1}
\y =\beta_0^*\1_n + \widetilde{\Z}_{\bar{r}}\bbeta_{\bar{r}}^*  + \be,
\end{align}
where $\beta_0^*$ is the intercept, $\bbeta_{\bar{r}}^*\in \mathbb{R}^{p-1}$ is the regression coefficient vector,
and $\be\in \mathbb{R}^n$ is the random error vector with zero mean. Interestingly, although it appears that the model in \eqref{eq:model1} depends on the choice of the reference level, it in fact admits a symmetric form. 
By simple algebra, model \eqref{eq:model1} can be equivalently expressed as
\begin{align}\label{eq:model2}
\y =\beta_0^*\1_n + \Z\bbeta^*  + \be, \qquad \mbox{s.t.} \sum_{j=1}^{p}\beta_j^*=0,
\end{align}
where $\bbeta^*$ is the regression coefficient vector for design matrix $\Z$, and $\e$ and $\beta_0^*$ are the same as in model \eqref{eq:model1}. It can be showed that $\bbeta_{\bar{r}}^*\in \mathbb{R}^{p-1}$ is a subvector of a regression coefficient vector $\bbeta^*\in\mathbb{R}^p$ by removing its $r$th component $\beta_r^*$.

Consequently, in classical regression setups, the least squares estimation under model \eqref{eq:model1} is equivalent to the constrained least squares estimation under model \eqref{eq:model2}. However, in high dimensional scenarios, i.e., when $p$ is much larger than $n$, the two model formulations could lead to discrepancies in regularized estimation. For example, the two corresponding lasso criteria \citep{tib1996} are no longer equivalent:
\begin{align}
\min_{\beta_0,\bbeta_{\bar{r}} } & \left\{\frac{1}{2n}\|\y -\beta_0\1_n - \widetilde{\Z}_{\bar{r}}\bbeta_{\bar{r}}\|^2  + \lambda \|\bbeta_{\bar{r}}\|_1\right\},\label{eq:model1lasso}\\
\min_{\beta_0,\bbeta}& \left\{\frac{1}{2n}\|\y -\beta_0\1_n - \Z\bbeta\|^2  + \lambda \|\bbeta\|_1\right\}, \qquad \mbox{s.t.} \sum_{j=1}^{p}\beta_j=0,\label{eq:model2lasso}
\end{align}
where $\|\cdot\|$, $\|\cdot\|_1$ denote the $\ell_2$, $\ell_1$ norms, respectively, and $\lambda$ is a tuning parameter controlling the amount of regularization. Although \eqref{eq:model1lasso} is simpler to compute, clearly its solution and hence its variable selection depend on the choice of the reference component. In contrast, \eqref{eq:model2lasso} remains to be symmetric in all the $p$ components. \citet{Lin2014} proposed and studied \eqref{eq:model2lasso} and showed that the estimator admits many desirable properties \citep{AITCHISON2003}. 



\subsection{Sparse Functional Log-Contrast Regression}\label{Covariate Model}


In the preterm infant study, the compositional predictors are observed over a continuous domain, i.e., time, and thus they should be treated as functional compositional data. Recall from Section \ref{sec:data} that $\y \in \mathbb{R}^n$ is the response vector, $\X(t) \in \mathbb{R}^{n\times p}$ the matrix of the functional and compositional predictors at $t$, and $\Z_c\in \mathbb{R}^{n\times p_c}$ the matrix of time-invariant control variables. Here to focus on the main idea, we assume $\X(t)$ is completely observed for $t\in \mathbb{T}$, and the discussion about handling discrete time data is deferred to Section \ref{sec:est:discrete}. Similar as in Section \ref{sec:method:llc}, we define $\widetilde{\Z}_{\bar{r}}(t) \in \mathbb{R}^{n\times (p-1)}$, for $r=1,\ldots, p$, and $\Z(t) = \log(\X(t)) \in \mathbb{R}^{n\times p}$. 


Motivated by model \eqref{eq:model2}, we propose a \textit{functional log-contrast regression model},
\begin{equation}
\y = \beta_0^*\1_n+ \Z_c\bbeta_c^* + \int_{t\in \mathbb{T}} \Z(t)\bbeta^*(t) dt + \be,\qquad  \mbox{s.t. } \mathbf{1}_p\trans\bbeta^*(t) = 0,\,\forall t\in \mathbb{T}, \label{eq:fmodel}
\end{equation}
where $\beta_0^*$ is the intercept, $\bbeta_c^* \in \mathbb{R}^{p_c}$ is the regression coefficient vector corresponding to the control variables, $\bbeta^*(t)=[\beta_1^*(t),\ldots,\beta_p^*(t)] \trans \in \mathbb{R}^p$ is the functional regression coefficient vector as a function of $t$, and the remaining terms are defined the same as in model \eqref{eq:model2}. The proposed model allows the compositional predictors to have potentially different effects on the response through $\bbeta^*(t)$, and their aggregated effects on the response is then given by the integral of $\Z(t)$ weighted by $\bbeta^*(t)$ over time. Following \citet{Lin2014}, here we adopt the symmetric form of the log-contrast model, in which the zero-sum constraints preserve the simplex structure over time while all the compositional components are treated equally. 


To address the problems in the preterm infant study, we consider both sparsity and smoothness of $\bbeta^*(t)$. First, as it is believed that only a few compositional components are relevant to the prediction of the outcome, we assume the true coefficient curves are sparse, i.e., $s^* =|\mathcal{S}| \ll p$, where $\mathcal{S}$ is the index set of the non-zero coefficient curves
\begin{align*}
\mathcal{S} = \{j; \beta_j^*(t)\neq 0 \mbox{ for some } t\in\mathbb{T}, j = 1,\ldots,p.\}.
\end{align*}
This sparsity assumption is the basis of component selection and is widely applicable, especially when $p$, the number of compositional components, is large. Second, since the effects of gut microbiome compositions on preterm infant's neurodevelopment evolves gradually over the postnatal period, we assume the coefficient curves are smooth over $t$, and adopt a truncated basis expansion approach \citep{Ramsay2005} to bring the infinite dimensional problem to finite dimensions. Specifically, we assume
    \begin{align}
    \bbeta^*(t)=\B{B}^*\bPhi(t),\label{eq:basis}
    \end{align}
    where $\B{B}^*=[\bbeta_1^*,\ldots,\bbeta_p^*]\trans \in \mathbb{R}^{p \times k}$ is a coefficient matrix, and $\bPhi(t)= [\phi _1(t),\ldots, \phi_k(t)]\trans \in \mathbb{R}^k$ consists of basis with $\J_{\phi\phi}=\int_{t\in \mathbb{T}} \bPhi(t)\bPhi\trans(t)dt$ being a positive definite (p.d.) matrix. Here for simplicity the same set of basis functions is used in the expansion of each $\beta_j(t)$, $j=1,\ldots,p$, which usually suffices in practice, and the extension to use different basis for different $\beta_j(t)$ is straightforward. There are many choices of the basis functions, e.g., Fourier basis, wavelet basis, and spline basis; see \citet{Ramsay2005} for a detailed account on the truncated basis expansion approaches in functional regression. 

Some discussions on the number of basis functions are in order. In classical least squares types of estimation, the choice of $k$ usually boils down to a bias-and-variance tradeoff. That is, while larger values of $k$ can lead to a better in-sample estimation at the risk of potential overfitting, smaller values of $k$ result in simpler estimators at the expense of missing interesting local oscillations. The issue can be resolved by echoing regularization, i.e., taking a sufficiently large $k$ to ensure the flexibility of the model and performing regularized estimation to avoid overfitting. From a theoretical perspective, we allow $k$ to grow with the sample size $n$, that is, the complexity of the functional curves that the method can potentially capture may increase when more data become available; see Section \ref{sec:th} for details. We also remark that for a non-parametric treatment, one can assume $\bbeta^*(t)$ satisfies certain H{\"o}lder condition \citep{Tsybakov2008} to control the approximate error induced by the basis truncation.  


The functional sparsity in $\bbeta^*(
t)$ now amounts to the row-sparsity of the coefficient matrix $\B{B}^*$ in \eqref{eq:basis}. The zero-sum constraint on $\bbeta^*(t)$, i.e., $\1_p\trans\bbeta^*(t)= 0$ for all $t \in \mathbb{T}$, is now equivalent to $\B{B}\strans \1_p = \0$. To see this, note that $\mathbf{1}_p\trans \bbeta^*(t) =0$ leads to $\int_{t\in \mathbb{T}} \1_p \trans \B{B}^* \bPhi(t) \bPhi(t)\trans (\1_p\trans \B{B}^*)\trans dt= \1_p\trans \B{B}^* \J_{\phi\phi} (\1_p\trans \B{B}^*)\trans=0$; it follows that $\B{B}\strans \1_p = \0$ as $\J_{\phi\phi}$ is p.d.. (The other direction holds trivially.) Further, the integral part in the model becomes
\begin{align*} 
\int_{t\in \mathbb{T}} \Z(t)\bbeta^*(t) dt
 &= \int_{t\in \mathbb{T}} \Z(t)\B{B}^*\bPhi(t) dt \\ 
 &=\left\{\int_{t\in \mathbb{T}} \Z(t)(\I_p \otimes \bPhi(t)\trans) dt \right\} \mbox{vec}(\B{B}\strans) 
 = \Z\bbeta^*,
\end{align*}
where, with some abuse of notations, we redefine $\bbeta^*   = [\bbeta_1\strans,\ldots, \bbeta_p\strans]\trans = \mbox{vec}(\B{B}\strans) \in \mathbb{R}^{pk}$ and
\begin{align}
\Z & = \int_{t\in \mathbb{T}} \Z(t)(\I_p \otimes \bPhi(t)\trans) dt = [\Z_1, \ldots, \Z_p] \in \mathbb{R}^{n\times (pk)}.\label{eq:Z}
\end{align}
Each $\bbeta_j^* \in \mathbb{R}^k$ and $\Z_j\in \mathbb{R}^{n\times k}$ correspond to the coefficient vector and the covariate matrix for the $j$th compositional component, respectively. We remark that $\Z$ is usually not exactly computed since $\Z(t)$ may not be fully observed; we defer the discussion to Section \ref{sec:est:discrete}. 

The functional model in \eqref{eq:fmodel} then becomes a constrained sparse linear regression model
\begin{equation}
\y = \beta_0^*\mathbf{1}_n + \Z_c\bbeta_c^* + \Z\bbeta^* + \be, \qquad \mbox{s.t. } \sum_{j=1}^{p}\bbeta_j^* = \0,\label{eq:fmodel2}
\end{equation}
where $\bbeta^*$ is expected to be sparse accordingly to the row-sparsity of $\B{B}^*$. To enable the selection of the compositional components, we therefore propose to conduct model estimation by minimizing a linearly constrained group lasso criterion \citep{yuan2006},
\begin{align}\label{eq:glasso}
\min_{\beta_0,\bbeta_c,\bbeta}\left\{\frac{1}{2n}\| \y -\beta_0\1_n - \Z_c\bbeta_c - \Z\bbeta\|^2  + \lambda \sum_{j=1}^{p} \|\bbeta_j\|\right\}, \qquad \mbox{s.t.} \sum_{j=1}^{p}\bbeta_j=\0,
\end{align}
where $\lambda$ is a tuning parameter controlling the amount of regularization. We remark that the group lasso penalty is imposed on the coefficients for each microbiome category to encourage microbe selection.

The proposed estimator possesses several desirable invariance properties \citep
{AITCHISON2003, Lin2014}:\\

\noindent (I) Scale invariance: the estimator is invariant to the transformation $\X(t)\rightarrow \S\X(t)$ where $\S = \mbox{diag}(\s)$ is a diagonal matrix with diagonal elements $\s = [s_1,\ldots, s_n]\trans$ and all $s_i > 0$. That is, it does not matter whether the data vectors are scaled to have a unit sum; the method only cares about the relative proportions.
        This is simply because $\Z(t)\bbeta(t) = \{\log(\X(t)) + \log(\s)\1_p\trans \}\bbeta(t) = \log(\X(t))\bbeta(t)$, due to the zero-sum constraints. In fact, this scale invariance continues to hold when the scaling factor $\s$ changes in time.\\

\noindent (II) Permutation invariance: results of the analysis do not depend on the sequence by which the components are given or labeled.\\

\noindent (III) Subcomposition coherence: if we know in advance that some $\beta_j(t)$ curves are zero, the analysis is unchanged if we apply the procedure to the subcompositions formed by the components of $\X(t)$ corresponding to the other $\beta_j(t)$ curves. To see this, suppose $\beta_j(t)\equiv 0$ for $j\in \mathcal{S}^c$, where $\mathcal{S}^c$ is the complement of a set $\mathcal{S}$ on $\{1,\ldots,p\}$. Let $\s(t) = \{\X_{\mathcal{S}}(t)\1_{|\mathcal{S}|} \}^{-1} \in \mathbb{R}^n$ be a scaling factor in which the inversion is entrywisely applied, so that $\mbox{diag}(\s(t))\X_{\mathcal{S}}(t)$ gives the subcompositions formed by the components in $\mathcal{S}$. Then we have
\begin{align*}
\log(\X(t))\bbeta(t) 
= & \{\log(\X_{\mathcal{S}}(t)) + \log(\s(t)) \1_{|\mathcal{S}|} \trans\}\bbeta_{\mathcal{S}}(t)\\
= & \log(\mbox{diag}(\s(t))\X_{\mathcal{S}}(t))\bbeta_{\mathcal{S}}(t).
\end{align*}
In particular, when there are only two non-zero components, e.g., $\bbeta_1(t)\neq 0$, $\bbeta_2(t)\neq 0$ and $\bbeta_j(t) = 0$ for $j=3,\ldots,p$, it is necessarily true that $\bbeta_1(t) = - \bbeta_2(t)$ due to the zero-sum constraint. This is neither an unpleasant artifact nor a limitation of the proposed method. This special case can be understood from the above property of subcomposition coherence: the analysis becomes the same as using the subcompositions formed from the first two components of $\X(t)$; consequently, the two possible log-ratios are exactly opposite to each other, so do their corresponding coefficient curves. Therefore, this feature is consistent with the data structure, as in two-part componsitional data, either part carries exactly the same information.

\section{Computation}\label{sec:est}

\subsection{Solving and Tuning Constrained Group Lasso}

The problem in \eqref{eq:glasso} is convex, and we solve it by an augmented Lagrangian algorithm \citep{Boyd2011}. To save space, details are provided in Section \ref{sec:supp:comp} of Supplementary Materials. 

A general way to select the tuning parameters, i.e., the basis dimension $k$ and the group penalty level $\lambda$, is the $K$-fold cross validation \citep{stone1974}, which is based on the predictive performance of the models. However, it is well known that the best model for prediction may not coincide with that for variable selection, and in fact, the former often leads to overselection. This phenomenon under our model is revealed in Section \ref{sec:th}, where it is shown that consistent component selection shall be based on the zero pattern of a thresholded estimator. Following \citet{FanTang2013} and \citet{Lin2014}, we thus also experiment with minimizing a generalized information criterion (GIC) for model selection which favors more sparse models,
$$
\textrm{GIC}(\lambda, k) = \log \big( \widehat{\sigma}^2(\lambda, k) \big) + \big(s(\lambda, k) - 1 \big) k
\log \big( \max\{pk +1 + p_c, n\}\big)
\frac{\log(\log n)}{n},
$$
where $\widehat{\sigma}^2(\lambda, k)$ is the mean squared error define as $\|\y - \widehat{\beta}_0(\lambda, k)\1_n - \Z_c\widehat{\bbeta}_c(\lambda,k)- \Z\widehat{\bbeta}(\lambda, k)  \|^2/n$ with $\widehat{\beta}_0(\lambda,k)$, $\widehat{\bbeta}_c(\lambda,k)$ and $\widehat{\bbeta}(\lambda, k)$ being the regularized estimators of regression coefficients, and $s(\lambda, k)$ is the number of nonzero coefficient groups in $\widehat{\bbeta}(\lambda, k)$.



\subsection{On Discrete Time Observations} \label{sec:est:discrete}
\label{subsec:Esit}

So far we have treated the integrated design matrix $\Z$ defined in \eqref{eq:Z} as given. In practical situations, however, the functional compositional predictors are most often not observed continuously but at discrete points, so $\Z$ can not be computed exactly. It is preferable that the induced uncertainty is considered in statistical modeling. In functional regression with a scalar response, \citet{Ramsay2005} discussed using truncated basis expansions for both the functional predictor and the functional coefficient curve to convert the infinite dimensional problem to finite dimensional, where truncation can be viewed as a type of regularization. Integrals were approximated by finite Riemann sums with discrete observations. The subsequent methodological development in functional regression has mainly followed along this general strategy, with various choices of basis functions and associated regularization approaches \citep{Morris2015}. For example, a functional predictor could be expanded by its eigenbasis via a functional principal component analysis, and the coefficient function could be expanded either by the same eigenbasis or by other basis such as wavelet or spline.

Due to the nature of the compositional data, ideally the functional compositions shall be expanded by a multivariate basis that preserves the simplex structure under truncation or other types of regularization, which however, to the best of our knowledge, is not yet available. In essence, a multivariate functional principal component analysis for compositional data, or a joint modeling approach of both the functional compositions and the regression, is needed, which is beyond the scope of the current work. 







For the preterm infant study, we take a pragmatic way of lifting the discrete-time data to continuous time. In this study, stool sample of each baby was collected daily whenever available; this resulted in a good coverage rate, with on average 12.2 daily samples for each infant over a 24-day study period. Also, biologists believe that the gut microbiome compositions change continuously over time. As such, we simply apply linear interpolation to obtain continuous time compositional curves. 
It can be readily seen that the linear interpolation approach amounts to compute $\Z$ defined in \eqref{eq:Z} using the trapezoid rule. 

Specifically, suppose for each $i=1,\cdots,n$, we observe
$\x_i(t)=[x_{i1}(t),\cdots,\allowbreak x_{ip}(t)]\trans$ at discrete time points $t_{i,v} \in \mathbb{T} = [T_1, T_2]$, for $v=1,\cdots,m_i$. That is, different subjects may be observed at different sets of time points in $\mathbb{T}$. Correspondingly, we have
$$
\z_i(t) = [z_{i1}(t),\cdots, z_{ip}(t)]\trans,
\quad t=t_{i,1},\cdots,t_{i,m_i},\; i=1,\cdots,n.
$$
Recall that $\Z = \int_{t\in \mathbb{T}} \Z(t)(\I_p \otimes \bPhi(t)\trans) dt \in \mathbb{R}^{n\times (pk)}$. Let $\Z=[\Z_1,\cdots,\Z_p] \in \mathbb{R}^{n \times (pk)}$ with $\Z_j=[z_{ijl}]_{n \times k} \in \mathbb{R}^{n\times k}$ for $j=1,\cdots,p$. Adopting linear interpolation, the entries of $\Z$ are computed using the trapezoid rule as follows,
\begin{align}
z_{ijl}  = &\sum_{v=2}^{m_i} \big(\phi_l(t_{i,v-1})z_{ij}(t_{i,v-1}) + \phi_l(t_{i,v})z_{ij}(t_{i,v}) \big) \frac{t_{i,v}-t_{i,v-1}}{2} \notag\\
                 & + \phi_l(t_{i,1})z_{ij}(t_{i,1})(t_{i,1}-T_0) + \phi_l(t_{i,m_i})z_{ij}(t_{i,m_i})(T_1 - t_{i,m_i}) ,\label{eq:int}
\end{align}
for $l=1,\cdots,k$. In what follows, unless otherwise noted, the integrals in the case of discrete data are computed using the above trapezoid rule.

\section{Theoretical Perspectives}\label{sec:th}

Here we attempt to provide some theoretical perspectives of two questions of practical concerns: (1) whether it is indeed beneficial to use the linearly constrained formulation rather than a naive baseline formulation, which chooses an arbitrary reference component to perform the log-ratio transformation of the compositional predictors and then proceeds with an unconstrained group lasso regression, and (2) whether the proposed method can accurately identify the relevant compositional predictors.

We first describe the setup. Our analysis is under the setting when the basis expansion in \eqref{eq:basis} holds and the integrated design matrix $\Z$ is available. The results are non-asymptotic, where both the number of functional predictors $p$ and the degrees of freedom of the basis functions $k$ are allowed to grow with the sample size $n$. For any $\bbeta = [\bbeta_1\trans,\ldots, \bbeta_p\trans]\trans \in \mathbb{R}^{pk}$, define $\bbeta_{\bar{r}}\in \mathbb{R}^{(p-1)k}$ as a subvector of $\bbeta$ by removing its $r$th component $\bbeta_r$, for each $r=1,\ldots, p$. Let $\mathcal{J} \subset \{1,\ldots,p\}$ be an index set, and denote $\bbeta_{\mathcal{J}}$ be a subvector of $\bbeta$ consisting of $\bbeta_j$, $j \in \mathcal{J}$. Denote $\mathcal{J}^c$ as the complement of $\mathcal{J}$. 
Recall that $\X(t) = [\x_1(t),\ldots, \x_n(t)]\trans \in \mathbb{R}^{n\times p}$ 
$\Z(t)=[z_{ij}(t)] \in \mathbb{R}^{n\times p}$ with $z_{ij}(t) = \log(x_{ij}(t))$, and $\widetilde{\Z}_{\bar{r}}(t)=[\widetilde{z}_{ij}(t)] \in \mathbb{R}^{n\times (p-1)}$ with $\widetilde{z}_{ij}(t) = \log(x_{ij}(t)/x_{ir}(t))$ for each $r=1,\ldots, p$. Moreover, due to \eqref{eq:basis}, we define
$\widetilde{\Z}_{\bar{r}} = \int_{t\in \mathbb{T}} \widetilde{\Z}_{\bar{r}}(t)(\I_p \otimes \bPhi(t)\trans) dt \in \mathbb{R}^{n\times (p-1)k}$
and $\Z = \int_{t\in \mathbb{T}} \Z(t)(\I_p \otimes \bPhi(t) \trans) dt \in \mathbb{R}^{n\times (pk)}$
as in \eqref{eq:Z}. Write $\widetilde{\Z}_{\bar{r}} = [\widetilde{\Z}_{\bar{r},1}, \ldots, \widetilde{\Z}_{\bar{r},r-1}, \widetilde{\Z}_{\bar{r},r+1}, \ldots, \widetilde{\Z}_{\bar{r},p}]$ with each $\widetilde{\Z}_{\bar{r},j} \in \mathbb{R}^{n\times k}$. Write $\Z = [\Z_1, \ldots, \Z_p]$ with each $\Z_{j} \in \mathbb{R}^{n\times k}$. Let $\bPsi_{\bar{r},j} = \widetilde{\Z}_{\bar{r},j}\trans \widetilde{\Z}_{\bar{r},j}/n$, for $r=1,\ldots, p$, $j = 1,\ldots, p$ and $j\neq r$. It boils down to analyze the constrained linear model with grouped predictors in \eqref{eq:fmodel2}. For simplicity, we omit the intercept and the control variables, and write the model as
\begin{align*}
\y = \Z\bbeta^* + \be, \qquad \mbox{s.t. } \sum_{j=1}^{p}\bbeta_j^* = \0, 
\end{align*}
where $\bbeta^* = [\bbeta_1\strans,\ldots, \bbeta_p\strans]\trans \in \mathbb{R}^{pk}$. Recall that $\mathcal{S} = \{j;\bbeta_j^*(t)\neq \0, j = 1,\ldots,p.\} = \{j;\bbeta_j^*\neq \0, j = 1,\ldots,p.\}$, and $s^* = |\mathcal{S}| \ll p$.

The proposed constrained group lasso
estimator is,
\begin{align}\label{eq:glasso2}
\widehat{\bbeta} = \arg\min_{\bbeta}\left\{\frac{1}{2n}\| \y - \Z\bbeta\|^2  + \lambda \sum_{j=1}^{p} \|\bbeta_j\|\right\}, \qquad \mbox{s.t.} \sum_{j=1}^{p}\bbeta_j=\0.
\end{align}
This estimator satisfies that $\widehat{\bbeta}_r = -\sum_{j\neq r}^{p}\widehat{\bbeta}_j$. Therefore, it holds true that for any $r = 1,\ldots, p$,
\begin{align*} 
\widehat{\bbeta}_{\bar{r}} = \arg\min_{\bbeta_{\bar{r}}}  \left\{  \frac{1}{2n}\| \y - \widetilde{\Z}_{\bar{r}}\bbeta_{\bar{r}}\|^2  + \lambda \sum_{j\neq r}^{p} \|\bbeta_j\| + \lambda \|\sum_{j\neq r}^p \bbeta_j\| \right\}.
\end{align*}
On the other hand, as to the baseline method, when the $r$th component is choosing as the reference level, the estimator is given by
\begin{align}\label{eq:glasso3}
\widetilde{\bbeta}_{\bar{r}} = \arg\min_{\bbeta_{\bar{r}}}  \left\{  \frac{1}{2n}\| \y - \widetilde{\Z}_{\bar{r}}\bbeta_{\bar{r}}\|^2  + \lambda \sum_{j\neq r}^{p} \|\bbeta_j\|  \right\}.
\end{align}

Our analysis follows and extends the work by \citet{Lounici2011} on group lasso to the case of constrained group lasso in \eqref{eq:glasso2} arising from functional compositional data analysis. All the proofs are provided in Section \ref{sec:supp:th} of Supplementary Materials.


\begin{assumption}\label{as:2}
The error terms $e_1,\ldots, e_n$ are independently and identically distributed as $\mbox{N}(0,1)$ random variables.
\end{assumption}

\begin{assumption}[Restricted Eigenvalue Condition (RE)]\label{as:3}
There exists $\kappa>0$, such that
\begin{align*}
\min\left\{ \frac{\|\Z\bDelta\|}{\sqrt{n}\|\bDelta_{\mathcal{J}}\|}:\right.
& |\mathcal{J}| \leq s^*, \bDelta\in \mathbb{R}^{pk} \neq \0, \sum_{j=1}^{p}\bDelta_j=\0,\\
&\left. \sum_{j\in \mathcal{J}^c}\|\bDelta_j\| + \min_j\|\bDelta_j\| \leq 3 \sum_{j\in \mathcal{J}}\|\bDelta_j\|.\right\} \geq \kappa.
\end{align*}
\end{assumption}


\begin{theorem}[Error Bounds]\label{th:1}
Suppose Assumptions \ref{as:2}--\ref{as:3} hold. Choose
$$
\lambda \geq \min_{r}\max_{j\neq r}\frac{2\sigma}{\sqrt{n}} \sqrt{\mbox{tr}(\bPsi_{\bar{r},j}) + 2\sigma_{\max}(\bPsi_{\bar{r},j})(2q\log(p-1)+\sqrt{kq\log(p-1)})}.
$$
Then, with probability at least $1-2(p-1)^{1-q}$, the constrained group lasso estimator $\widehat{\bbeta}$ in \eqref{eq:glasso2} satisfies that
\begin{align*}
\frac{1}{n}\|\Z(\widehat{\bbeta} - \bbeta^*)\|^2 \leq \frac{16\lambda^2s^*}{\kappa^2},\\
\sum_{j=1}^{p}\|\widehat{\bbeta}_j - \bbeta_j^*\| + \min_j \|\widehat{\bbeta}_j - \bbeta_j^*\| \leq \frac{16\lambda s^*}{\kappa^2}.
\end{align*}
\end{theorem}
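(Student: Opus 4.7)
My plan is to adapt the Lounici et al.\ (2011) analysis of group lasso to the constrained setting, leveraging the representation recorded just before the assumptions: for any reference index $r$, using $\bbeta_r=-\sum_{j\neq r}\bbeta_j$, the constrained problem is equivalent to an unconstrained group lasso in $\bbeta_{\bar r}$ with an extra penalty $\lambda\bigl\|\sum_{j\neq r}\bbeta_j\bigr\|=\lambda\|\bbeta_r\|$. I would fix $r^*$ to be the (deterministic, design-based) reference index realizing the outer minimum in the stated lower bound on $\lambda$, and carry out the whole argument in the reduced design $\widetilde{\Z}_{\bar{r}^*}$ while still summing the penalty across all $p$ groups.

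The first step is to build the noise event and derive the basic inequality. Since $\widetilde{\Z}_{\bar{r}^*,j}\trans\be\sim \mathrm{N}(\0,n\sigma^2\bPsi_{\bar{r}^*,j})$ under Assumption~\ref{as:2}, I would apply a Laurent--Massart / Hanson--Wright tail bound for the Gaussian quadratic form $\|\widetilde{\Z}_{\bar{r}^*,j}\trans\be\|^2$, use the slack $\mathrm{tr}(\bPsi^2)\le k\,\sigma_{\max}(\bPsi)^2$, choose $t=q\log(p-1)$, and union bound over the $p-1$ residual groups to conclude that $\mathcal{E}:=\bigl\{\max_{j\neq r^*}\|\widetilde{\Z}_{\bar{r}^*,j}\trans\be\|/n\le \lambda/2\bigr\}$ has probability at least $1-2(p-1)^{1-q}$. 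On $\mathcal{E}$, the optimality of $\widehat{\bbeta}_{\bar{r}^*}$ in the reformulated objective, together with $\sum_{j\neq r^*}\bbeta_j^*=-\bbeta_{r^*}^*$ on the right-hand side, yields a comparison in which both sides carry the full penalty $\lambda\sum_{j=1}^p\|\cdot\|$. Substituting $\y=\widetilde{\Z}_{\bar{r}^*}\bbeta_{\bar{r}^*}^*+\be$, using $\|\widetilde{\Z}_{\bar{r}^*}\bDelta_{\bar{r}^*}\|=\|\Z\bDelta\|$ since $\bDelta:=\widehat{\bbeta}-\bbeta^*$ satisfies $\sum_j\bDelta_j=\0$, bounding the cross term by $(\lambda/2)\sum_{j\neq r^*}\|\bDelta_j\|$ via $\mathcal{E}$, and splitting the penalty over $\mathcal{S},\mathcal{S}^c$ with $\bbeta_j^*=\0$ on $\mathcal{S}^c$, I expect to arrive at
\[
\frac{1}{n}\|\Z\bDelta\|^2+\lambda\sum_{j\in\mathcal{S}^c}\|\bDelta_j\|+\lambda\|\bDelta_{r^*}\|\;\leq\;3\lambda\sum_{j\in\mathcal{S}}\|\bDelta_j\|.
\]

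To close the argument, I drop $\|\Z\bDelta\|^2/n\geq 0$ and use $\|\bDelta_{r^*}\|\ge \min_j\|\bDelta_j\|$; together with $\sum_j\bDelta_j=\0$, this puts $\bDelta$ in the feasible set of Assumption~\ref{as:3} with $\mathcal{J}=\mathcal{S}$, giving $\|\Z\bDelta\|^2/n\ge \kappa^2\sum_{j\in\mathcal{S}}\|\bDelta_j\|^2\ge \kappa^2\bigl(\sum_{j\in\mathcal{S}}\|\bDelta_j\|\bigr)^2/s^*$ by Cauchy--Schwarz. Combining with the basic inequality yields a quadratic inequality in $u:=\sum_{j\in\mathcal{S}}\|\bDelta_j\|$, solvable as $u\le O(\lambda s^*/\kappa^2)$; substituting back delivers both the prediction bound $\|\Z\bDelta\|^2/n\le O(\lambda^2 s^*/\kappa^2)$ and, via the cone, the $\ell_1$-type bound $\sum_j\|\bDelta_j\|+\min_j\|\bDelta_j\|\le O(\lambda s^*/\kappa^2)$; careful bookkeeping of constants then produces the stated factor $16$. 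The hardest step is aligning the cone with Assumption~\ref{as:3}'s $\min_j\|\bDelta_j\|$ requirement: it is precisely the choice of $r^*$ as the best reference for noise control that leaves behind the extra $\lambda\|\bDelta_{r^*}\|$ on the left of the basic inequality, and replacing $\|\bDelta_{r^*}\|$ by the smaller $\min_j\|\bDelta_j\|$ is exactly what upgrades the usual 3-cone condition into the RE-compatible form.
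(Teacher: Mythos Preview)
Your proposal is correct and follows essentially the same route as the paper's own proof. The paper also reduces the noise term via the reference-level identity $\be\trans\Z\bDelta=\be\trans\widetilde{\Z}_{\bar r}\bDelta_{\bar r}$, invokes the Lounici et al.\ quadratic-form tail bound at the optimal $r$, and arrives at the same cone inequality with the extra $\min_j\|\bDelta_j\|$ term before applying Assumption~\ref{as:3} and Cauchy--Schwarz exactly as you describe. The only cosmetic differences are that the paper carries a general comparison point $\bbeta$ before specializing to $\bbeta^*$ and writes the bound through $\max_r\sum_{j\neq r}\|\bDelta_j\|=\sum_j\|\bDelta_j\|-\min_j\|\bDelta_j\|$ rather than fixing your deterministic $r^*$; your bookkeeping in fact yields the slightly sharper constants $9$ and $12$ in place of $16$, which of course still certifies the stated bounds.
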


It is interesting to compare with the baseline approach in \eqref{eq:glasso3}, for which once a baseline $r$ is chosen, its theoretical property mimics that of the regular group lasso model with $p-1$ groups \citep{Lounici2011}. Due to the linear constraints, the restricted set of $\bDelta$ in Assumption \ref{as:3} for which the minimum is taken is smaller than that of the regular group lasso estimator. As such, the condition for the constrained model becomes weaker in general. Also, in Theorem \ref{th:1}, the choice of $\lambda$, which directly impacts the final estimation error rate, is taken as a minimal value over $r$, the choice of the baseline. Therefore, in view of the RE condition and the choice of $\lambda$, our results reveal that the proposed method is capable of achieving the best possible performance of the baseline method under a possibly weaker condition.


\begin{assumption}[$\beta$-min Condition]\label{as:4}
Choose the same $\lambda$ as in Theorem \ref{th:1}. Assume that
$$
\min_{j\in \mathcal{S}}\|\bbeta_j^*\| > \frac{16\lambda s^*}{\kappa^2}.
$$
\end{assumption}

\begin{corollary}[Selection Consistency]\label{th:2}
Suppose Assumptions \ref{as:2}--\ref{as:4} hold. Let
$$
\widehat{\mathcal{S}} = \{j: \|\widehat{\bbeta}_j\|> \frac{8\lambda s^*}{\kappa^2}\}.
$$
Then, with probability at least $1-2(p-1)^{1-q}$, we have that $\widehat{\mathcal{S}} = \mathcal{S}$.
\end{corollary}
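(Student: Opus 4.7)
The plan is to establish the set identity $\widehat{\mathcal{S}} = \mathcal{S}$ by first converting the $\ell_1$-type group error bound of Theorem~\ref{th:1} into a uniform per-group bound, and then invoking the $\beta$-min condition of Assumption~\ref{as:4} together with a standard triangle-inequality separation. Note that the threshold $8\lambda s^*/\kappa^2$ defining $\widehat{\mathcal{S}}$ is exactly half of the $\beta$-min lower bound $16\lambda s^*/\kappa^2$, so the entire argument hinges on showing that $\|\widehat{\bbeta}_j - \bbeta_j^*\| \le 8\lambda s^*/\kappa^2$ for \emph{every} $j$, including the null coordinates. The key ingredient that makes this sharpening possible is the zero-sum linear constraint satisfied by both $\widehat{\bbeta}$ and $\bbeta^*$, which is precisely what distinguishes the constrained estimator from an unconstrained group lasso.

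First, I would condition on the event of probability at least $1 - 2(p-1)^{1-q}$ guaranteed by Theorem~\ref{th:1}. Discarding the nonnegative $\min_j$ term from the second inequality there yields $\sum_{j=1}^p \|\widehat{\bbeta}_j - \bbeta_j^*\| \le 16\lambda s^*/\kappa^2$. Next, because both $\widehat{\bbeta}$ and $\bbeta^*$ obey $\sum_j \bbeta_j = \mathbf{0}$, the error itself satisfies $\sum_{j=1}^p(\widehat{\bbeta}_j - \bbeta_j^*) = \mathbf{0}$. Fixing any $j_0$, rewriting $\widehat{\bbeta}_{j_0} - \bbeta_{j_0}^* = -\sum_{j \ne j_0}(\widehat{\bbeta}_j - \bbeta_j^*)$ and invoking the triangle inequality gives
\[
\|\widehat{\bbeta}_{j_0} - \bbeta_{j_0}^*\| \;\le\; \sum_{j \ne j_0}\|\widehat{\bbeta}_j - \bbeta_j^*\| \;=\; \sum_{j=1}^p \|\widehat{\bbeta}_j - \bbeta_j^*\| \;-\; \|\widehat{\bbeta}_{j_0} - \bbeta_{j_0}^*\|,
\]
which self-improves to $\|\widehat{\bbeta}_{j_0} - \bbeta_{j_0}^*\| \le \tfrac{1}{2}\sum_j\|\widehat{\bbeta}_j - \bbeta_j^*\| \le 8\lambda s^*/\kappa^2$ uniformly in $j_0$.

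With this uniform per-group bound in hand, the two inclusions are immediate. For $j \in \mathcal{S}$, the reverse triangle inequality combined with Assumption~\ref{as:4} yields $\|\widehat{\bbeta}_j\| \ge \|\bbeta_j^*\| - \|\widehat{\bbeta}_j - \bbeta_j^*\| > 16\lambda s^*/\kappa^2 - 8\lambda s^*/\kappa^2 = 8\lambda s^*/\kappa^2$, placing $j$ in $\widehat{\mathcal{S}}$. For $j \notin \mathcal{S}$, one has $\bbeta_j^* = \mathbf{0}$, so $\|\widehat{\bbeta}_j\| = \|\widehat{\bbeta}_j - \bbeta_j^*\| \le 8\lambda s^*/\kappa^2$ and $j \notin \widehat{\mathcal{S}}$. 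I expect the proof to be quite short, and the only nontrivial step is the self-improving triangle inequality above; the main obstacle is really one of recognition, namely noticing that the zero-sum constraint mechanically halves the per-group error relative to the aggregate bound and that this halving is exactly what aligns the threshold with the $\beta$-min level (and mirrors the appearance of the $\min_j$ term in Assumption~\ref{as:3}).
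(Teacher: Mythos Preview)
Your proof is correct and follows the same route as the paper's: obtain the uniform per-group bound $\max_j\|\widehat{\bbeta}_j-\bbeta_j^*\|\le 8\lambda s^*/\kappa^2$ from Theorem~\ref{th:1}, then separate the two inclusions via the triangle inequality and Assumption~\ref{as:4}. In fact your write-up is more complete than the paper's, which simply asserts the $\ell_{2,\infty}$ bound without justification; your self-improving triangle-inequality step using the zero-sum constraint is exactly the missing link that turns the aggregate bound $16\lambda s^*/\kappa^2$ into the per-group bound $8\lambda s^*/\kappa^2$.
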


Corollary \ref{th:2} reveals the ``overselection'' phenomenon due to convex penalization; see, e.g., \citet{WeiHuang2010}. That is, the proposed constrained group lasso estimator in general does not miss important variable groups/components, albeit overselecting some irrelevant ones. As such, a thresholding operation is preferred in order to recovery the correct sparsity pattern exactly. However, the theoretical threshold is not available in practice, as it involves unknown quantities such as $\sigma^2$ and $\kappa$. Nevertheless, the results provide guarantee that using the original estimator can avoid false negatives at the expense of some false positives, which is acceptable in many applications.


\section{Simulation}\label{sec:sim}
We conduct simulation studies to compare the performance of our proposed sparse functional log-contrast regression via constrained group lasso (CGL) in \eqref{eq:glasso}, the baseline approach in the form of \eqref{eq:glasso3} via group lasso (BGL) in which the reference level is chosen randomly, and the naive approach of group lasso (GL) in which the zero-sum constraints are ignored in \eqref{eq:glasso}, and cross sectional method (I) of taking average of observations along time (Average), and cross sectional method (II) of considering the snapshot of the most significant time point (Snapshot).

The compositional data are generated as follows. We first generate $M$ time points within the interval $[0,1]$, i.e., $0=t_1<\cdots<t_M=1$. For inducing dependence between time points, we consider an autoregressive correlation structure, $\bSig_{T} = [\rho_{T}^{|\mu - \nu|}]_{M\times M}$, where $1\leq \mu,\nu\leq M$; for inducing dependence between compositions, we consider a compound symmetry correlation structure, $\bSig_{X} =[\rho_{X}^{I(j=j')}]_{p\times p}$, where $1\leq j,j'\leq p$ and $I(\cdot)$ is the indicator function. The ``non-normalized'' data for each subject $i$, $i=1,\ldots, n$, are then generated from multivariate normal distribution as $\w_i = [\w_i({t_1})\trans, \cdots, \w_i(t_M)\trans]\trans \sim \mbox{N}(\0, \sigma_{X}^2(\bSig_{T} \otimes \bSig_{X}))$,  where each $\w_i(t_{\nu}) \in \mathbb{R}^p$ for $\nu= 1,\ldots, M$. Finally, the compositional data are obtained as $x_{ij}(t_{\nu}) = \exp(w_{ij}(t_{\nu})) /\sum_{j=1}^p \exp( w_{ij}(t_{\nu}))$, for $i=1,\ldots,n$, $j=1,\ldots,p$ and $\nu = 1,\ldots, M$. The regression curves $\bbeta^*(t)$ are generated as $\B{B}^*\bPhi(t)$, where $\bPsi(t)$ is from a set of cubic spline basis computed using the \verb|bs| 
function in the R package \verb|splines| 
with $t \in \{t_1,\ldots, t_M\}$ and degrees of freedom set to 5. The first three rows of $\B{B}^*$ are set as $[1, 0, 1 , 0, -0.5]$, $[0, 0,  -1,  0,  1]$ and $[-1, 0, 0, 0, -0.5]$, respectively, and the rest are set to zero. The intercept is set to be $\beta_0^* =1$ and for simplicity we do not consider additional control. The error terms are generated as independent $\mbox{N}(0, \sigma^2)$ random variables where $\sigma^2$ is set to control the signal to noise ratio (SNR). Finally, the response $\y$ is generated by model \eqref{eq:fmodel}, where the integral is computed as in \eqref{eq:int}. We have experimented with $(n,p)\in \{(50,30), (100, 30), (100,100),(100,200)\}$ and parameter settings $M = 20$, $\sigma_X^2 = 9$, $\rho_T \in \{0, 0.6\}$, $\rho_X = \{0, 0.6\}$ and $\mbox{SNR} = \{2, 4\}$. The simulation is repeated 100 times under each setting.






The prediction error (Pred) is measured by $\|\y_{te} - \Z_{te}\widehat{\bbeta} \|^2/n_{te}$,  computed from an independently generated test sample $(\y_{te}; \X_{te}(t), t\in \{t_1,\ldots, t_M\})$ of size $n_{te} = 500$.
The estimation error (Est) is measured by $\sum_{j=1}^p (\int_{[0,1]} |\widehat{\beta}_j(t) - \beta_j^*(t) |^2\, dt)^{1/2}/p$. 
For variable selection of the compositional components, we report the false positive rate (FPR) and the false negative rate (FNR), based on the sparsity patterns of $\widehat{\bbeta}(t)$ and $\bbeta^*(t)$. We have experimented with both 10-fold cross validation (CV) and GIC for selecting tuning parameters $k$ and $\lambda$. As shown in Corollary \ref{th:2}, a thresholding of the estimator is preferred for the purpose of variable selection, although the ideal threshold is not available in practice. Here with the same spirit and based on empirical evidence, we define the selected index set $\widehat{\mathcal{S}}$ based on the relative magnitudes of the $p$ estimated coefficient curves:
$$
\widehat{\mathcal{S}} = \{j;
\big(\int_{[0,1]} \widehat{\beta}_{j}^2(t)\, dt \big)^{1/2} / \{\sum_{j=1}^{p} \big(\int_{[0,1]} \widehat{\beta}_{j}^2(t)\, dt \big)^{1/2}\} \ge 1/p,
j = 1, \cdots, p
\}.
$$
That is, we only count the components whose relative ``energy'' exceeds the average $1/p$ as selected.




The simulation results for $(n,p) = (50,30)$ and $(n,p) = (100,200)$ with $\mbox{SNR} = 4$ are reported in Tables \ref{tbl:n50p30snr4} -- \ref{tbl:n100p200snr4}. The two naive methods, Average and Snapshot perform much worse in prediction than other methods; they tend to miss important variables as seen from their high FNR values. (To save space, we do not show the results of Average and Snapshot with GIC tuning.) In general, CGL shows better predictive and selection performance than both GL and BGL, and in some cases the improvement can even be substantial;  (We have also tried the unpenalized least squares estimator, which fails miserably in prediction and hence is omitted.) The BGL method performs the worst among the three. The two tuning methods, CV and GIC, show quite difference behaviors: the former generally yields larger false positive rates and much smaller false negative rates than the latter. Indeed, this is consistent with the theoretical results in Section \ref{sec:th} that the proposed convex regularized estimation approach has a tendency of over-selection when tuned based on optimizing predictive performance. Nevertheless, the CV-tuned estimators rarely miss important components and performs much better in prediction comparing to their GIC tuned counterparts. Therefore, CV may be preferable in practice when one cares more about prediction and can afford some false alarms for the capture of all the relevant signals.

\begin{table}[htp!]
    \centering
    \caption{Simulation results for $(n,p)= (50,30)$ and $\mbox{SNR} = 4$. Reported are the average values over 100 simulation runs, with the standard deviations in parentheses. For better presentation, the values of Est and Pred are multiplied by 10.}
    \label{tbl:n50p30snr4}
    \begin{tabular}{lllccccc}
        \toprule
        $ (\rho_{X}, \rho_{T})$ 
        & Criterion & Method & Est & Pred & FPR (\%) & FNR (\%) \\
        \toprule
        $ (0, 0)$ & CV & BGL & 0.25 (0.01) & 0.39 (0.01) & 28.85 (1.28) &  0.00 (0.00)  \\
        &    & GL &  0.23 (0.01)  & 0.39 (0.01) & 27.48 (1.35) &  0.00 (0.00)  \\
        &    & CGL & 0.23 (0.01) &  0.34 (0.01) & 29.22 (1.43) &  0.00 (0.00)  \\
        &    & Average &         & 2.03 (0.03) &  12.00 (1.37) & 66.00 (3.45) \\
        &    & Snapshot &        & 2.11 (0.05) & 16.74 (1.77) & 53.00 (3.45) \\
        [1mm]
        & GIC & BGL & 0.33 (0.01) & 1.46 (0.06) & 4.04 (0.19)  & 48.00 (3.33) \\
        &     & GL & 0.31 (0.01)  & 1.44 (0.05) & 0.19 (0.08)  & 52.67 (2.69) \\
        &     & CGL &  0.29 (0.01) & 1.24 (0.05) & 1.63 (0.24)  & 20.00 (2.37) \\
        \midrule 
        $ (0, 0.6)$ & CV & BGL & 0.28 (0.01) & 1.27 (0.04) & 30.70 (1.48) & 0.33 (0.33)  \\
        &    & GL & 0.26 (0.01)  & 1.21 (0.03) &  29.04 (1.40) &  0.00 (0.00)  \\
        &    & CGL &  0.25 (0.01) &  1.13 (0.03) & 29.67 (1.43) &  0.00 (0.00)  \\
        &    & Average &         & 5.58 (0.14) &  19.67 (1.96) & 34.00 (3.45) \\
        &    & Snapshot &        & 5.31 (0.10) & 23.44 (1.60) & 22.67 (1.83) \\   
        [1mm]   
        & GIC & BGL & 0.34 (0.01) & 4.61 (0.16) & 3.74 (0.16)  & 52.67 (2.60) \\
        &     & GL &  0.31 (0.00)  & 3.93 (0.12) & 0.11 (0.06)  & 51.67 (2.39) \\
        &     & CGL & 0.31 (0.01) & 3.91 (0.17) & 1.52 (0.24)  & 23.67 (2.19) \\
        \midrule
        $ (0.6, 0)$ & CV & BGL & 0.25 (0.01) & 0.15 (0.01) & 29.26 (1.35) &  0.00 (0.00)  \\
        &    & GL & 0.24 (0.01)  & 0.16 (0.00) & 29.93 (1.42) &  0.00 (0.00)  \\
        &    & CGL &  0.23 (0.01) &  0.14 (0.00) &  29.07 (1.22) &  0.00 (0.00)  \\
        &    & Average &         & 0.80 (0.01) &  14.63 (1.70) & 57.33 (3.52) \\
        &    & Snapshot &        & 0.85 (0.02) & 16.70 (1.73) & 57.00 (3.29) \\
        [1mm]
        & GIC & BGL & 0.34 (0.01) & 0.65 (0.02) & 3.81 (0.19)  & 56.33 (2.67) \\
        &     & GL & 0.32 (0.01)  & 0.62 (0.02) &  0.19 (0.08)  & 59.33 (2.25) \\
        &     & CGL &  0.30 (0.01) &  0.54 (0.02) & 1.63 (0.22)  & 22.67 (2.22) \\
        \midrule
        $ (0.6, 0.6)$ & CV & BGL & 0.29 (0.01) & 0.53 (0.02) & 33.52 (1.38) & 0.33 (0.33)  \\
        &    & GL & 0.26 (0.01)  & 0.49 (0.02) &  30.22 (1.31) &  0.00 (0.00)  \\
        &    & CGL & 0.25 (0.01) &  0.45 (0.01) & 30.37 (1.44) &  0.00 (0.00)  \\
        &    & Average &         & 2.02 (0.04) &  22.81 (1.89) & 26.33 (2.81) \\
        &    & Snapshot &        & 2.10 (0.03) & 22.85 (1.60) & 25.67 (1.76) \\
        [1mm]
        & GIC & BGL & 0.35 (0.01) & 1.85 (0.06) & 3.81 (0.15) & 53.67 (2.59)  \\
        &     & GL & 0.32 (0.00)  & 1.69 (0.05) &  0.11 (0.06) & 57.67 (2.00)  \\
        &     & CGL &  0.31 (0.01) &  1.52 (0.06) & 1.74 (0.23) & 25.00 (2.24)  \\
        \bottomrule
    \end{tabular}
\end{table}

\begin{table}[htp!]
    \centering
    \caption{Simulation results for $(n,p)= (100,200)$ and $\mbox{SNR} = 4$. The layout is the same as in Table \ref{tbl:n50p30snr4}.}  \label{tbl:n100p200snr4}
    \begin{tabular}{lllccccc}
        \toprule
        $ (\rho_{X}, \rho_{T})$ 
        & Criterion & Method & Est & Pred & FPR (\%) & FNR (\%) \\
        \toprule
        $ (0,0)$ & CV & BGL &  0.04 (0.00) & 0.31 (0.01) & 15.28 (0.48) &  0.00 (0.00) \\
        &   & GL  &  0.04 (0.00) & 0.31 (0.01) & 15.27 (0.48) &  0.00 (0.00) \\
        &   & CGL &  0.04 (0.00) &  0.29 (0.00) & 15.57 (0.51) &  0.00 (0.00) \\
        &   & Average &         & 1.98 (0.03) &  3.04 (0.41)  & 73.33 (3.11)\\
        &   & Snapshot &        & 1.99 (0.03) & 4.82 (0.64)  & 59.33 (3.20)\\ 
        [1mm]
        & GIC & BGL & 0.05 (0.00) & 1.45 (0.05) & 0.51 (0.01) & 44.00 (3.07) \\
        &     & GL  &  0.04 (0.00) & 1.33 (0.05) & 0.01 (0.01) & 46.33 (2.88) \\
        &     & CGL &  0.04 (0.00) &  1.13 (0.05) & 0.19 (0.03) &  11.67 (1.73) \\
        \midrule
        $ (0,0.6)$ & CV & BGL &  0.04 (0.00) & 1.02 (0.02) & 16.26 (0.51) & 0.00 (0.00) \\
        &    & GL  &  0.04 (0.00) & 0.97 (0.02) & 15.62 (0.52) &  0.00 (0.00) \\
        &    & CGL &  0.04 (0.00) &  0.94 (0.02) & 16.32 (0.50) &  0.00 (0.00) \\
        &    & Average &         & 5.41 (0.10) & 7.17 (0.70)  & 27.00 (2.71)\\
        &    & Snapshot &        & 5.14 (0.10) &  6.57 (0.56)  & 27.67 (1.26)\\
        [1mm]
        & GIC & BGL & 0.05 (0.00) & 4.15 (0.16) & 0.51 (0.01) & 43.00 (3.01) \\
        &     & GL  &  0.04 (0.00) &  3.44 (0.12) &  0.01 (0.01) & 42.67 (2.92) \\
        &     & CGL &  0.04 (0.00) & 3.57 (0.15) & 0.10 (0.02) &  16.00 (1.92) \\
        \midrule
        $ (0.6,0)$ & CV & BGL &  0.04 (0.00) &  0.12 (0.00) & 14.78 (0.49) &  0.00 (0.00) \\
        &    & GL  &  0.04 (0.00) &  0.12 (0.00) & 15.44 (0.61) &  0.00 (0.00) \\
        &    & CGL &  0.04 (0.00) &  0.12 (0.00) & 15.07 (0.55) &  0.00 (0.00) \\
        &    & Average &         & 0.80 (0.01) & 5.14 (0.68)  & 58.33 (3.80) \\
        &    & Snapshot &        & 0.81 (0.01) &  4.30 (0.49)  & 61.67 (2.82) \\
        [1mm]
        & GIC & BGL & 0.05 (0.00) & 0.55 (0.02) & 0.53 (0.01) & 39.00 (3.39) \\
        &     & GL  &  0.04 (0.00) & 0.47 (0.02) &  0.02 (0.01) & 36.33 (3.22) \\
        &     & CGL &  0.04 (0.00) &  0.41 (0.02) & 0.15 (0.03) &  9.67 (1.79)  \\
        \midrule
        $ (0.6,0.6)$ & CV & BGL &  0.04 (0.00) & 0.41 (0.01) & 16.21 (0.50) &  0.00 (0.00) \\
        &    & GL  &  0.04 (0.00) & 0.40 (0.01) & 15.30 (0.55) &  0.00 (0.00) \\
        &    & CGL &  0.04 (0.00) &  0.39 (0.01) & 15.59 (0.50) &  0.00 (0.00) \\
        &    & Average &         & 2.03 (0.03) &  7.35 (0.63)  & 16.67 (2.25) \\
        &    & Snapshot &        & 2.09 (0.04) & 7.62 (0.67)  & 27.33 (1.29) \\
        [1mm]
        & GIC & BGL & 0.05 (0.00) & 1.76 (0.06) & 0.52 (0.01) & 48.33 (2.93) \\
        &     & GL  &  0.04 (0.00) & 1.46 (0.05) &  0.01 (0.01) & 47.33 (2.73) \\
        &     & CGL &  0.04 (0.00) &  1.40 (0.06) & 0.15 (0.03) &  17.00 (1.98) \\
        \bottomrule
    \end{tabular}
\end{table}

Figure \ref{fig:sim1} show boxplots of prediction errors from CV tuning for $\mbox{SNR}=4$. (The case of $\mbox{SNR}=2$ is reported in Section \ref{sec:supp:sim} of Supplementary Materials; we do note include Average and Snapshot methods as they perform much worse.) The performance of all methods deteriorates when the SNR becomes smaller, the between-component correlation becomes smaller, or the between-time correlation becomes stronger. Small between-component correlation causes the presence of a few dominating compositional components due to the unit-sum constraints, while large between-time correlation makes the functional compositions smooth over time and consequently makes it hard to distinguish the relevant components from the others. 

\begin{figure}[!htbp]
    \captionsetup[subfigure]{singlelinecheck=true}
    \centering
    \subcaptionbox[singlelinecheck=true]{$n = 50, p = 30$}{
        \includegraphics[ width = 5.5cm, angle = -90]{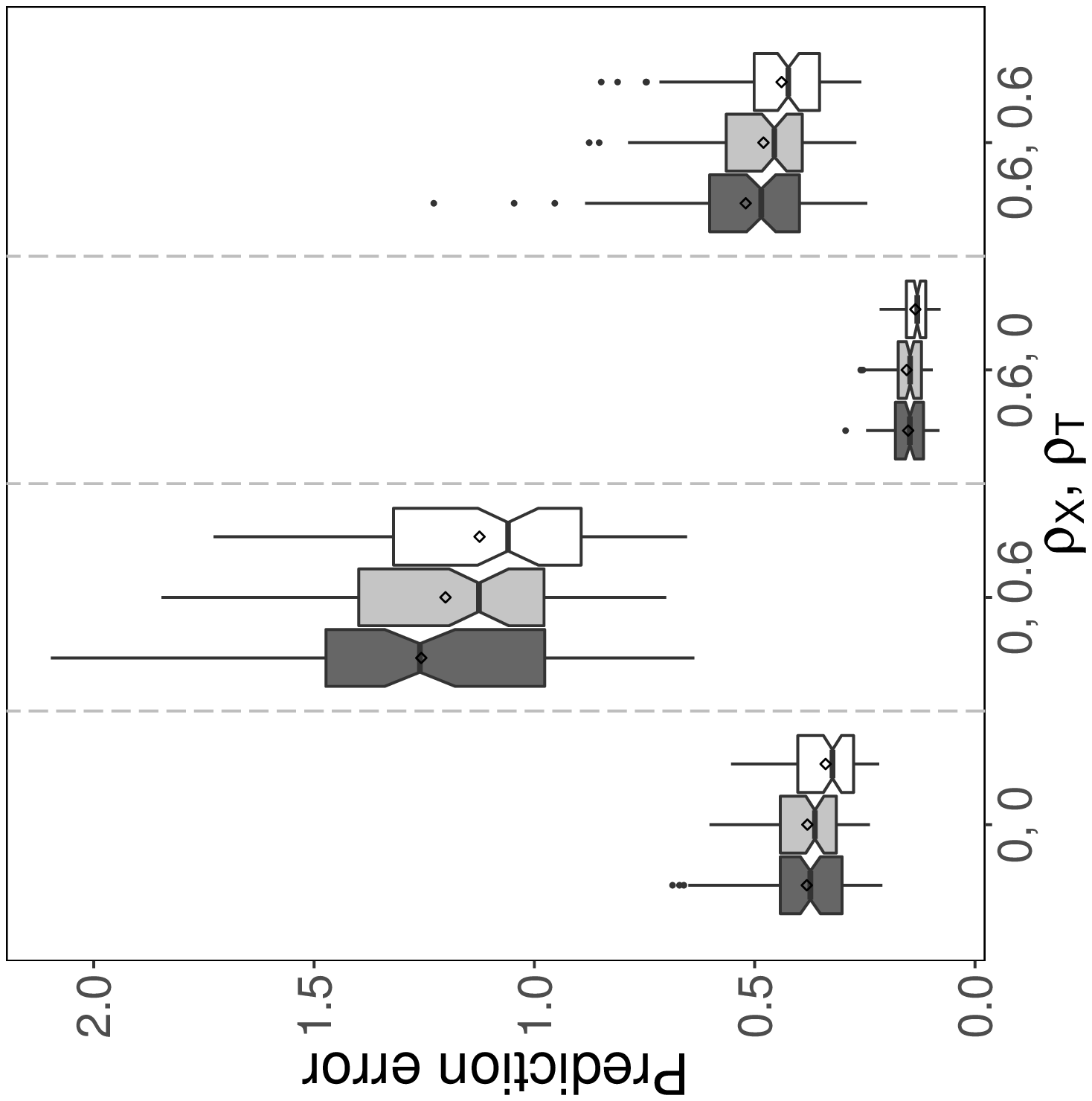}
    }\hspace{0.5cm}%
    \subcaptionbox{$n = 100, p = 30$}{
        \includegraphics[ width = 5.5cm, angle = -90]{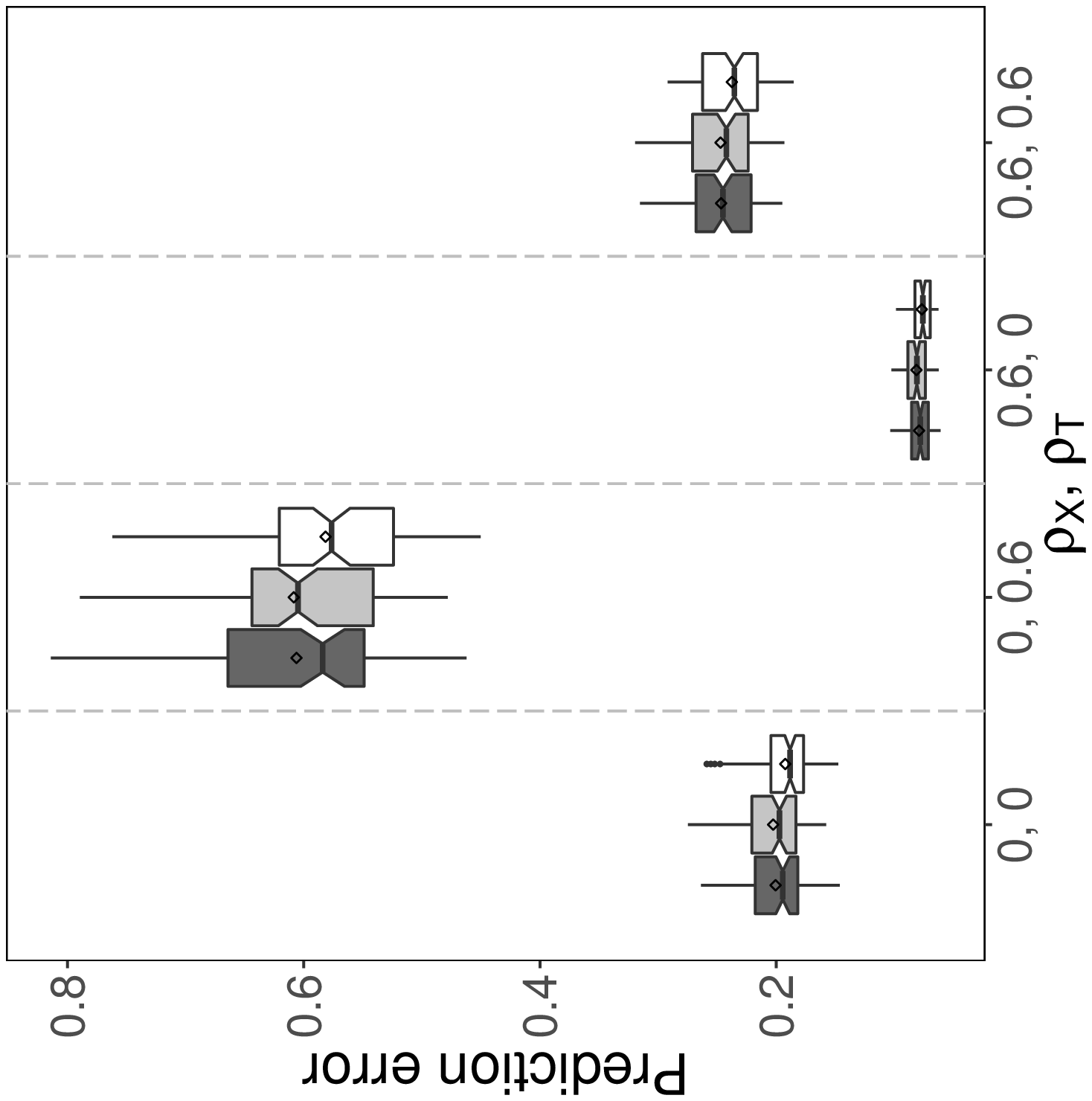}
    }\hspace{0.5cm}%
    \subcaptionbox{$n = 100, p = 100$}{
        \includegraphics[ width = 5.5cm, angle = -90]{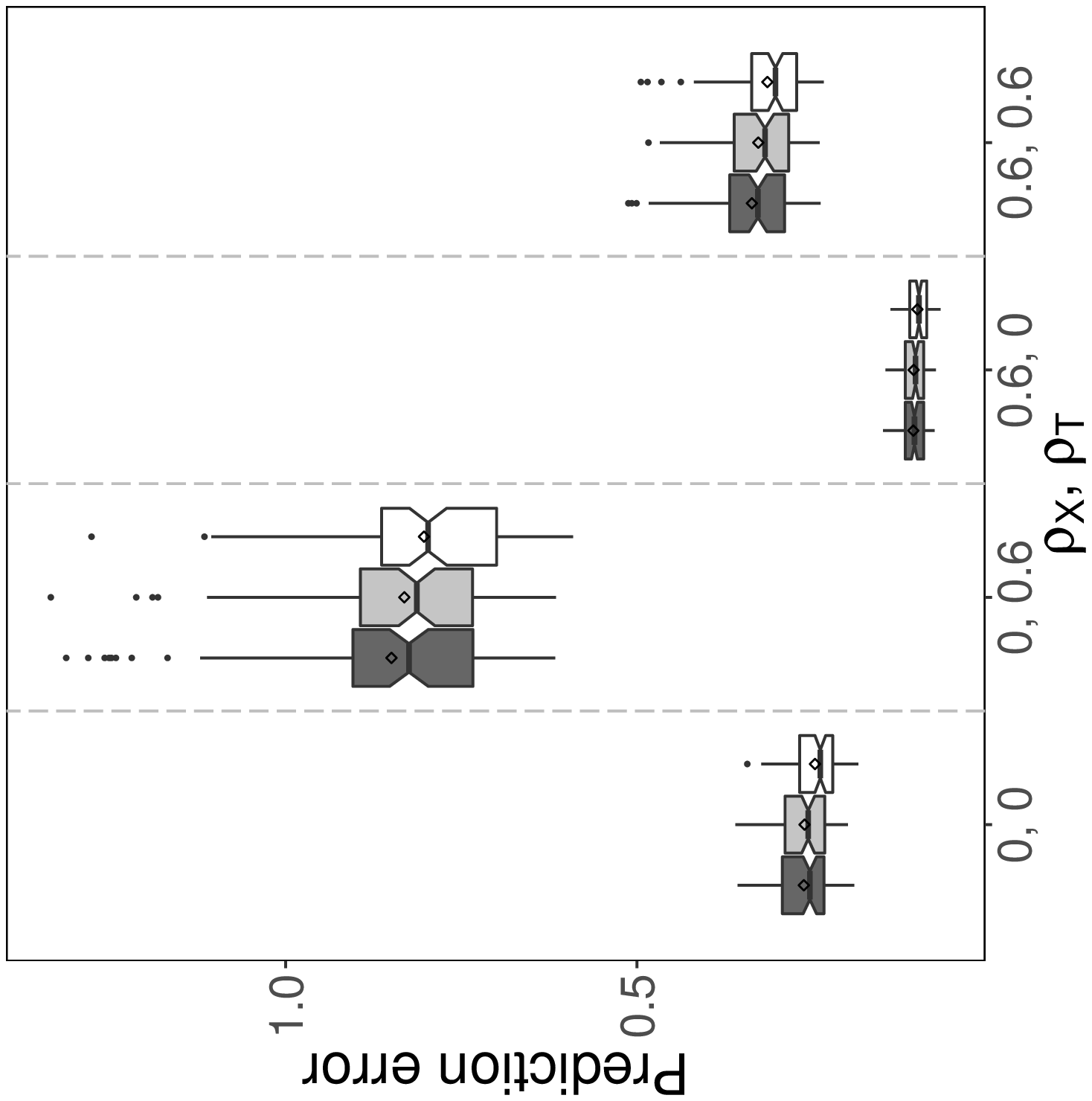}
    }\hspace{0.5cm}%
    \subcaptionbox{$n = 100, p = 200$}{
        \includegraphics[ width = 5.5cm, angle = -90]{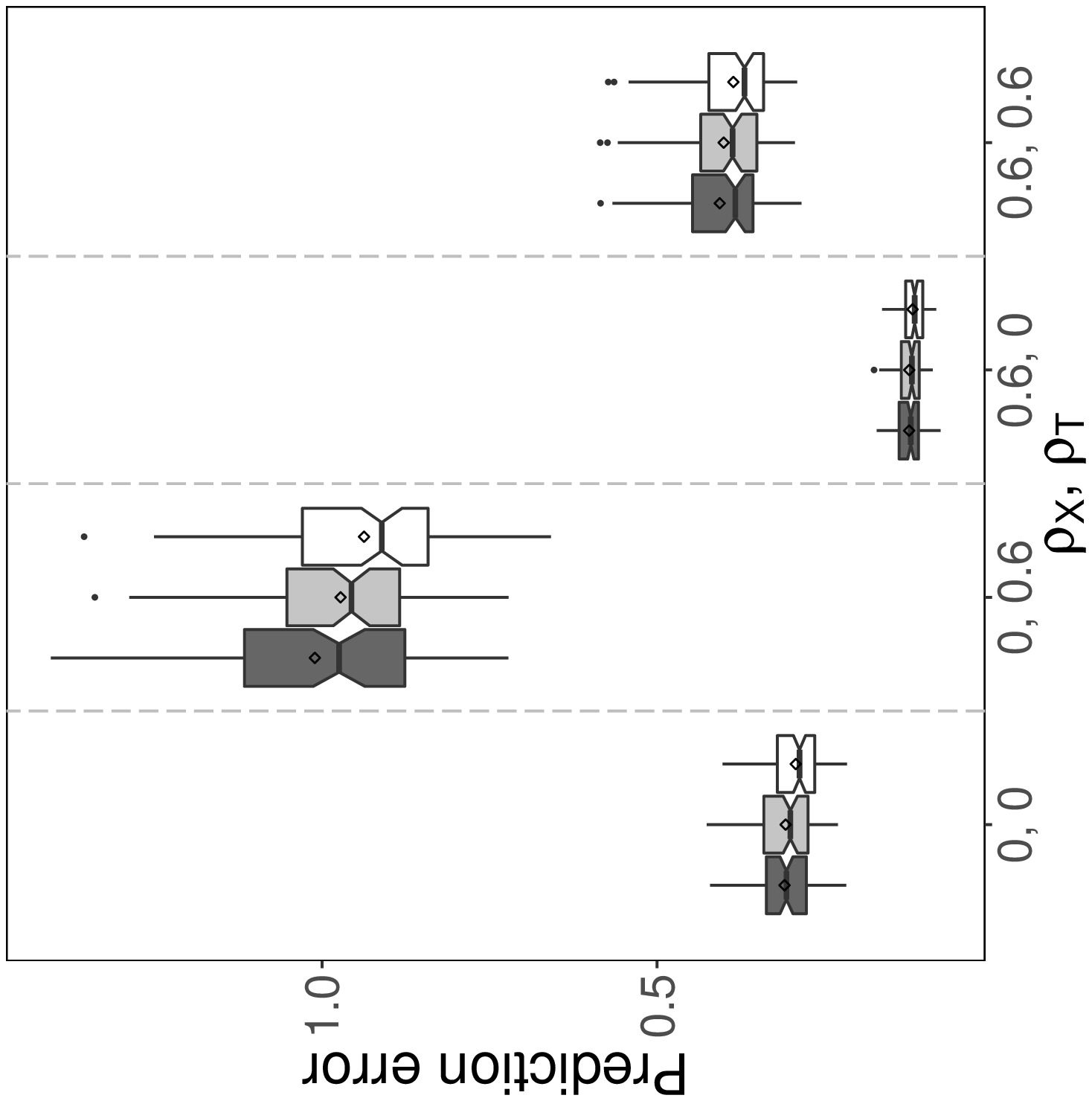}
    }
    \caption{Boxplots of prediction errors for various simulation settings with $\mbox{SNR} = 4$. The dark grey, light grey and white colors correspond to three different estimation methods BGL, GL and CGL, respectively.
    } \label{fig:sim1}
\end{figure}



\section{Linking Microbiome Trajectories to Neurobehavioral Outcomes}\label{sec:app}

Recall that our main objective is to identify the microbiome markers that are predictive of later infant neurodevelopment as measured by NNNS. This predictive association, if proven true, can provide supporting evidence to the claim that the stressful early life experience of preterm infants is imprinting gut microbiome by the regulation of the brain-gut axis. We tackle the problem with the functional log-contrast regression model in \eqref{eq:fmodel}, in which the composite NSTRESS score serves as the response variable, the gut microbiome observed during the early postnatal period serves as the functional compositional predictors, and the infant characteristics listed in Table \ref{tab:dem} below serve as the time-invariate control variables. We apply the proposed CGL approach for model estimation and compositional component selection. The cubic spline basis is used, and the tuning of the degrees of freedom $k$ as well as the sparsity parameter $\lambda$ is done using cross validation.

Our approach is able to identify four bacteria categories at the order level that are associated with the neurobehavioral outcome of infant, after controlling for the effects of several infant characteristics. Before we discuss the selected microbiome markers, let's first focus on the effects of the control variables. Table \ref{tab:dem} shows the estimated coefficients of the control variables along with some descriptive statistics. It is seen that the neurobehavioral outcome is better (i.e., NSTRESS is small) for infants with larger birth weight, smaller SNAPE-II score and more mother's breast milk for feeding. Regarding the delivery of infant, vaginal delivery and the absence of premature rupture of membranes are associated with better neurobehavioral development. These interesting and intuitive results are consistent with existing literature \citep{Neu2011CesareanVV, Feldman2003}. The analysis also shows that female infants tend to perform slightly better than male, after accounting for other effects. 

\begin{table}[htp!]
    \centering
    \caption{Descriptive statistics of infant characteristics and their estimated coefficients from fitting the sparse functional log-contrast regression. Values of estimated coefficient are multiplied by 100.}\label{tab:dem}
    \begin{tabular}{lcc}
                \hline
        Numerical variable & Mean (sd) & Estimated coefficient \\
        \hline
        Birth weight (in gram) & 1451.7 (479.3) & $-0.003$\\
        SNAPE-II               & 9.3 (10.6)     & $0.122$\\
        \%MBM                  & 61.8 (29.9)    & $-9.79$\\
                \cmidrule(lr){1-3}
        Binary variable & Percentage of ones &  \\
        Gender (female = 1)        & 50.0\%        & $-0.065$\\
        PROM (yes = 1)             & 44.1\%         & $3.11$\\
        Delivery type (vaginal =1) & 35.3\%         & $-5.43$\\

                \hline
    \end{tabular}
\end{table}

The estimated functional effects of the four selected bacteria categories are shown in the four panels of Figure \ref{fig:phyla}, respectively. In each panel, the lower part shows the estimated functional effects of a category over time (between 5 and 28 days of postnatal age), and the upper part attempts to show directly from raw data how this category changes over time for infants with high, medium, or low ``adjusted'' NSTRESS score, obtained by subtracting the estimated effects of the control variables and other selected bacteria categories from the observed NSTRESS scores.
Specifically, we construct smoothed curves of log-compositions of each selected category for three clusters of infants (using locally weighted scatterplot smoothing). 
For each category, the clusters are based on the percentiles of its ``adjusted'' NSTRESS score. 
The curve with its 90\% confidence band is shown in red for the high group, i.e., infants with the upper one third of the adjusted scores, in blue for the medium group, i.e., infants with the middle one third of the adjusted scores, and in green for the low group, i.e., infants with the lower one third of the adjusted scores. As an example, for category 1, the red curve increases in the beginning to be above the other two curves and then becomes mostly below them in the later stage. This suggests that the time-varying effect of category 1 on the NSTRESS score is first positive and then negative, which is clearly reflected by the estimated functional effects. Similarly for the other three selected categories, the patterns of the estimated effects agree well with those of the observed data. This verifies visually that our proposed model and the estimation approach yield sensible results.

\begin{figure}[htp!]
    \captionsetup[subfigure]{singlelinecheck=true}
    \centering
    \subcaptionbox[singlelinecheck=true]{Category 1\label{fig:3_1}}{
        \includegraphics[ width = 5.5cm, height = 5cm]{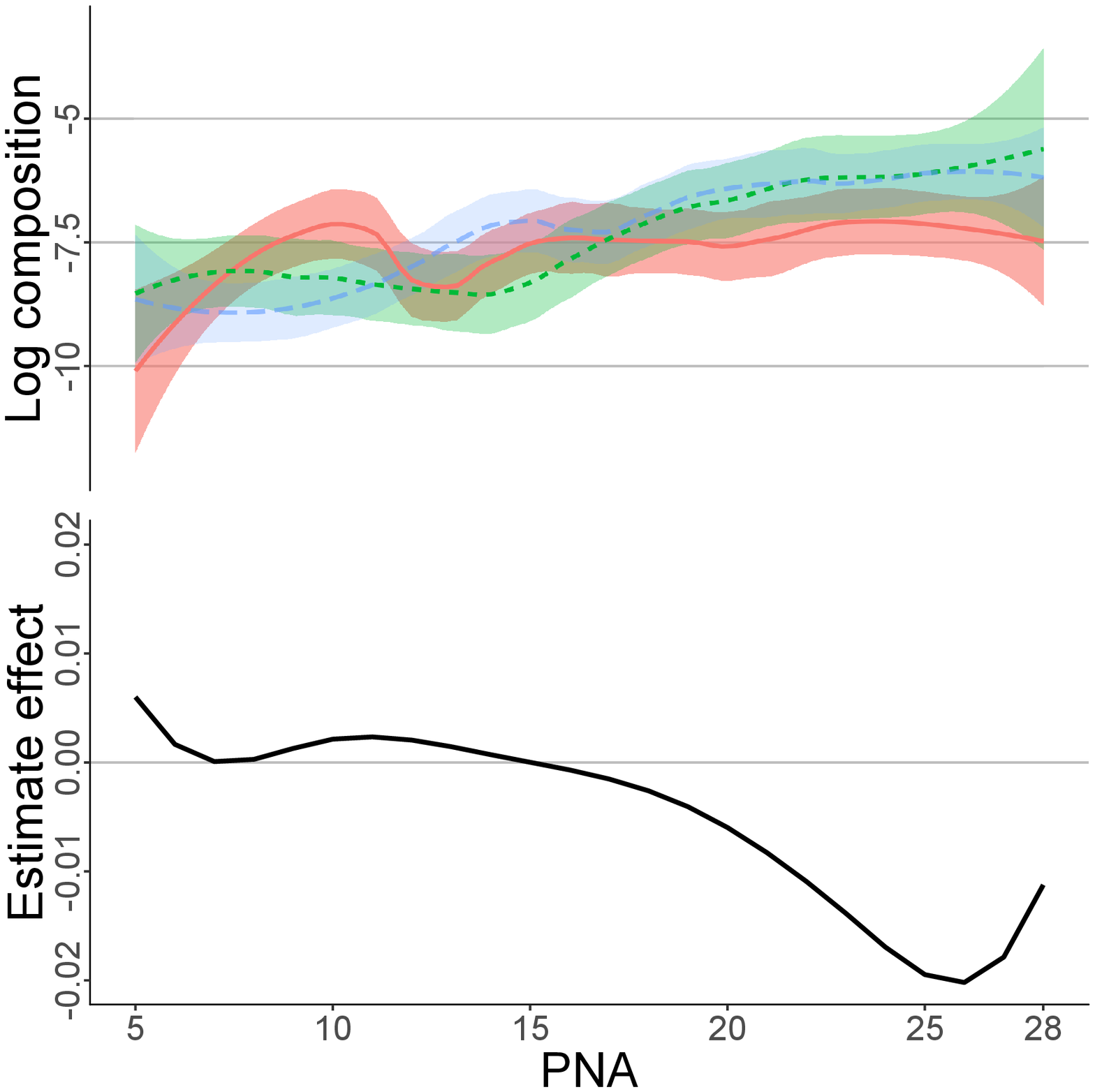}
    }\hspace{0.5cm}%
    \subcaptionbox{Category 9\label{fig:3_9}}{
        \includegraphics[ width = 5.5cm, height = 5cm]{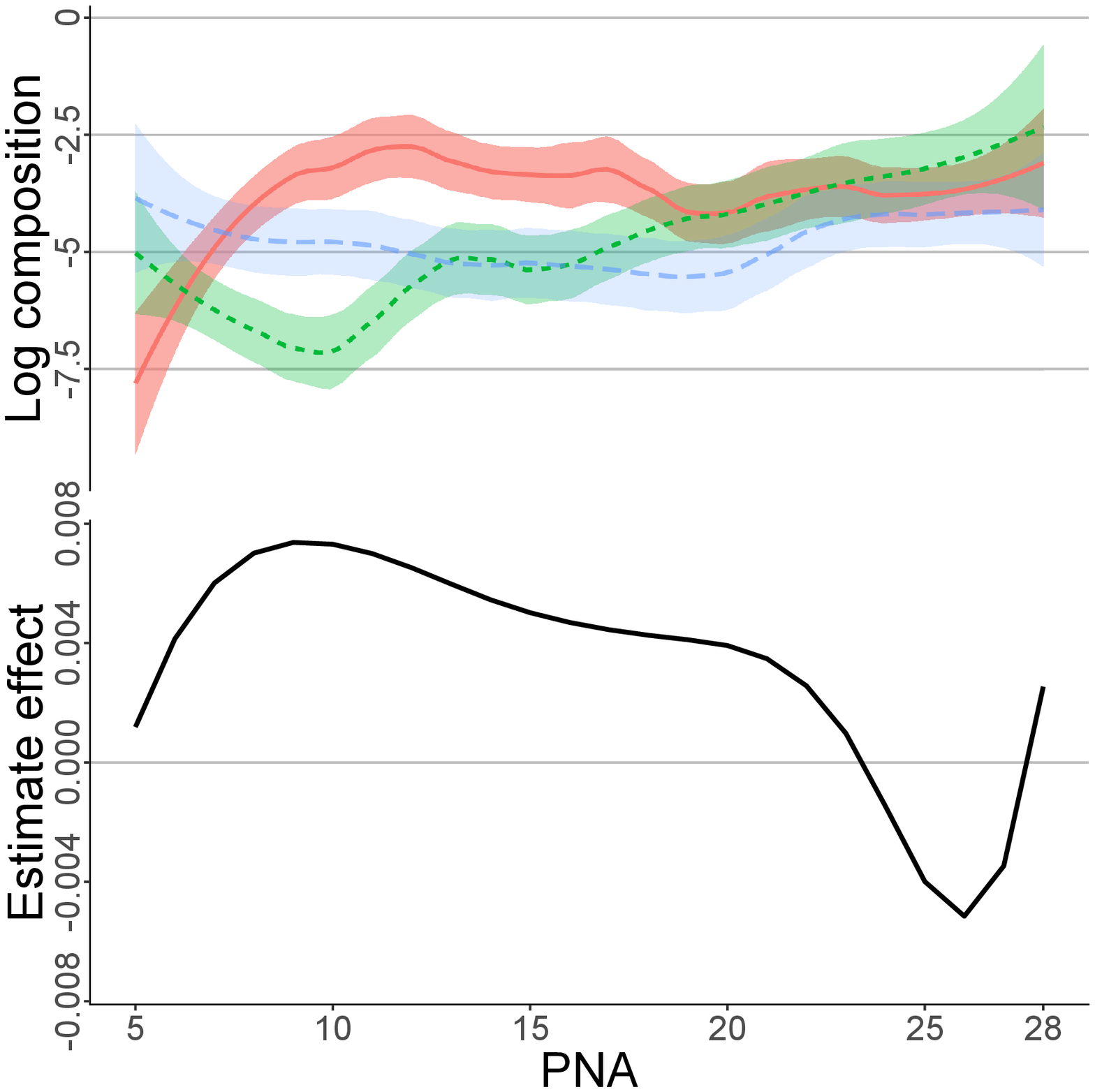}
    }\hspace{0.5cm}%
    \subcaptionbox{Category 10\label{fig:3_10}}{
        \includegraphics[ width = 5.5cm, height = 5cm]{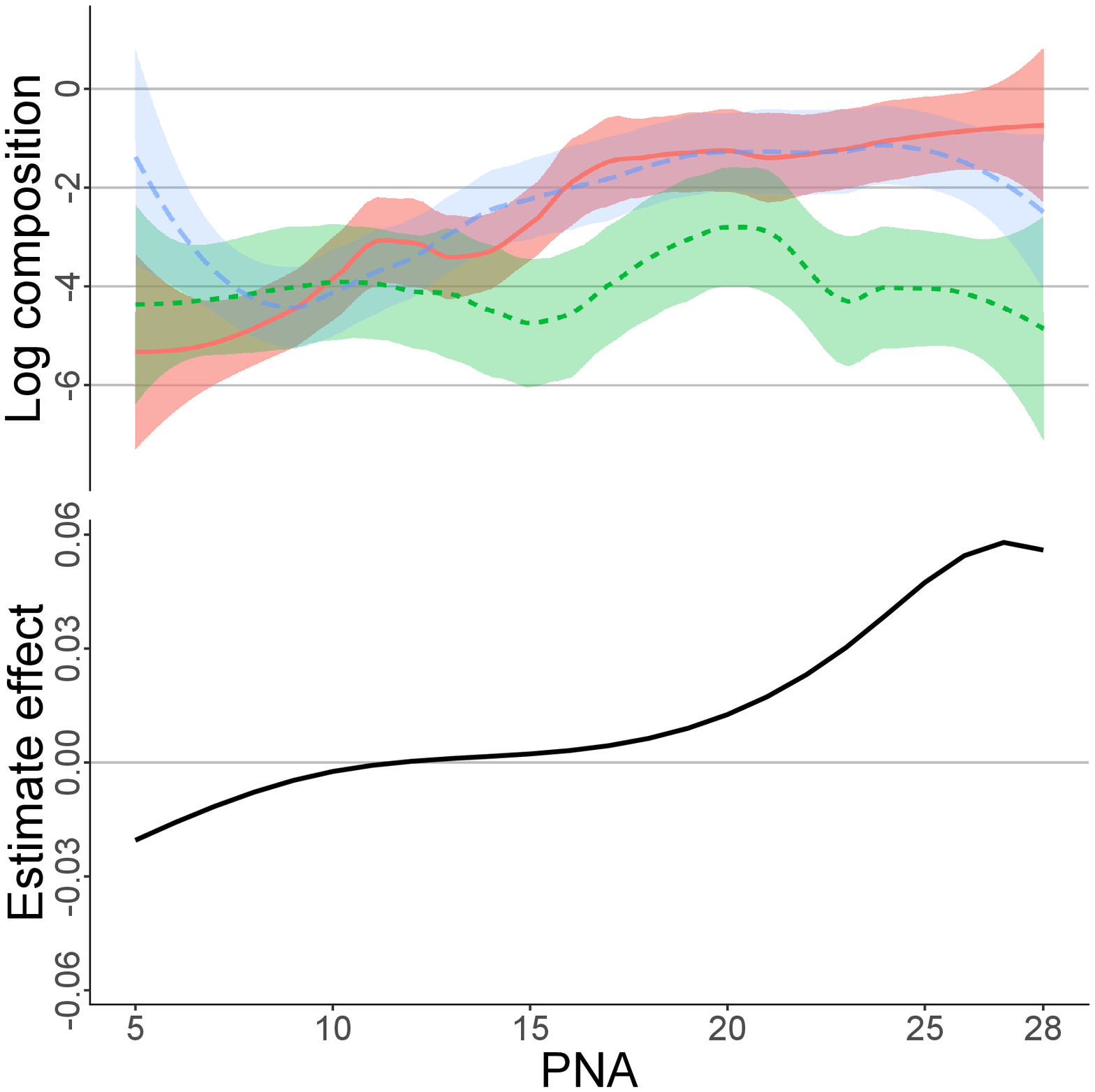}
    }\hspace{0.5cm}%
    \subcaptionbox{Category 19\label{fig:3_19}}{
        \includegraphics[ width = 5.5cm, height = 5cm]{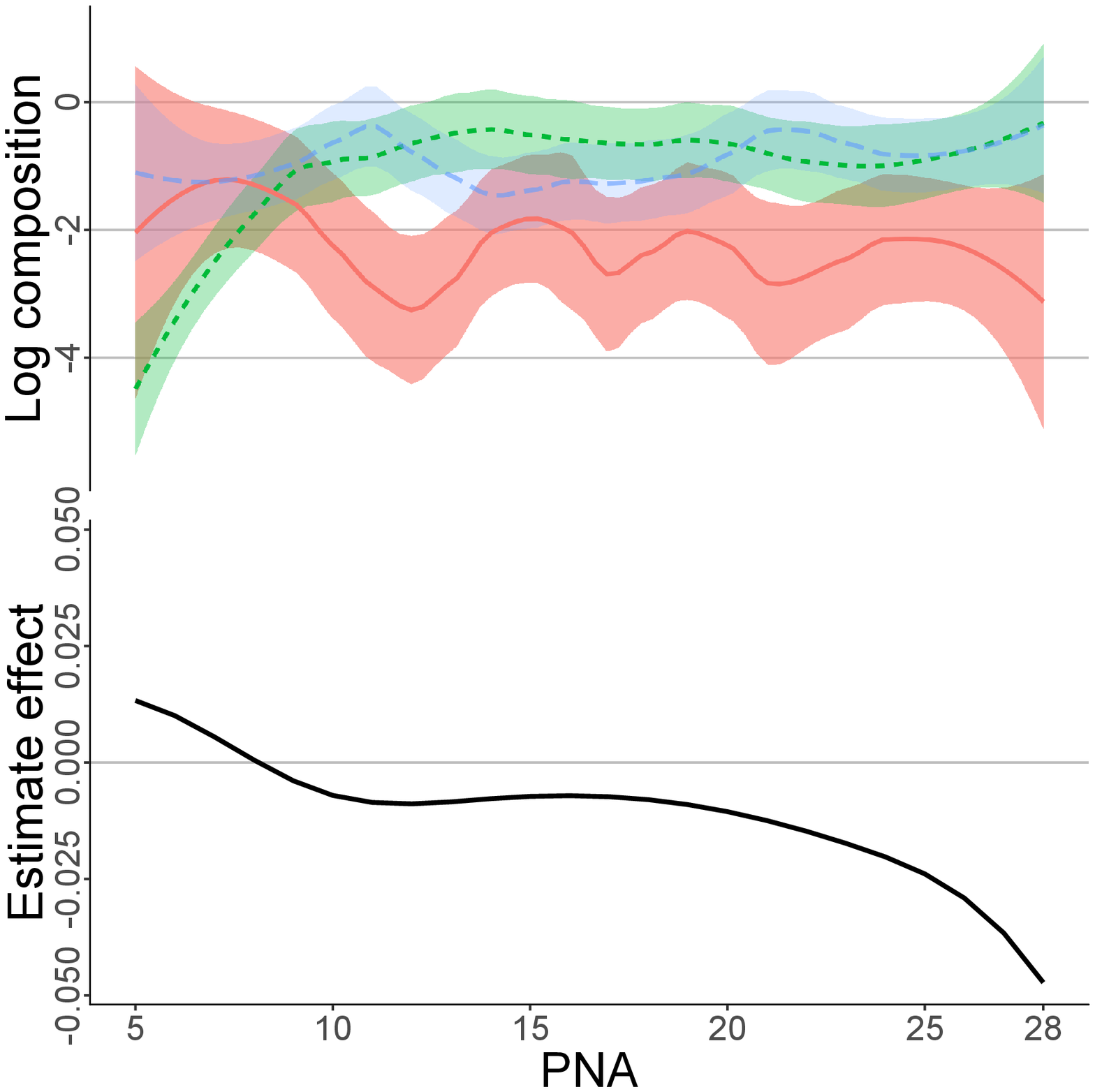}
    }
    \caption{Estimated effects of the four selected bacteria categories at the order level over infant's postnatal age (PNA) of 5 to 28 days. In each sub-graph, the upper panel shows how this category changes over time for three clusters of infants. 
For each category, the clusters are based on the percentiles of its partial residuals, obtained by subtracting the estimated effects of the control variables and other selected bacteria categories from the observed NSTRESS scores. The curve with its 90\% confidence band is shown in red for the high group, in blue for the medium group, and in green for the low group.
    }
    \label{fig:phyla}
\end{figure}

To access the stability of the results, we have generated 100 bootstrap samples and used the same cross validation procedure to select the best models. The results are show in Figure \ref{fig:boot}. The signs of the coefficients of the control variables are quite stable, except for the gender and SNAPE-II variables; this shows that these two variables may not have much effect on the outcome when conditioning on other terms in the model. For each control variable, the sign with the higher proportion among its 100 bootstrap estimates agrees with that of the estimate from fitting the original data, except for the gender. Furthermore, the top four categories with the highest proportions of being selected in bootstrap coincide with the categories selected from fitting the original data. Categories 10 and 19 are selected about 90\% of the times, while 9 and 1 are selected more than 70\% and 60\% of the times, respectively.

\begin{figure}[htp!]
    \captionsetup[subfigure]{singlelinecheck=false}
    \centering
    \begin{subfigure}{0.45\linewidth}
        \caption{}
        \includegraphics[width=1\linewidth
        ]{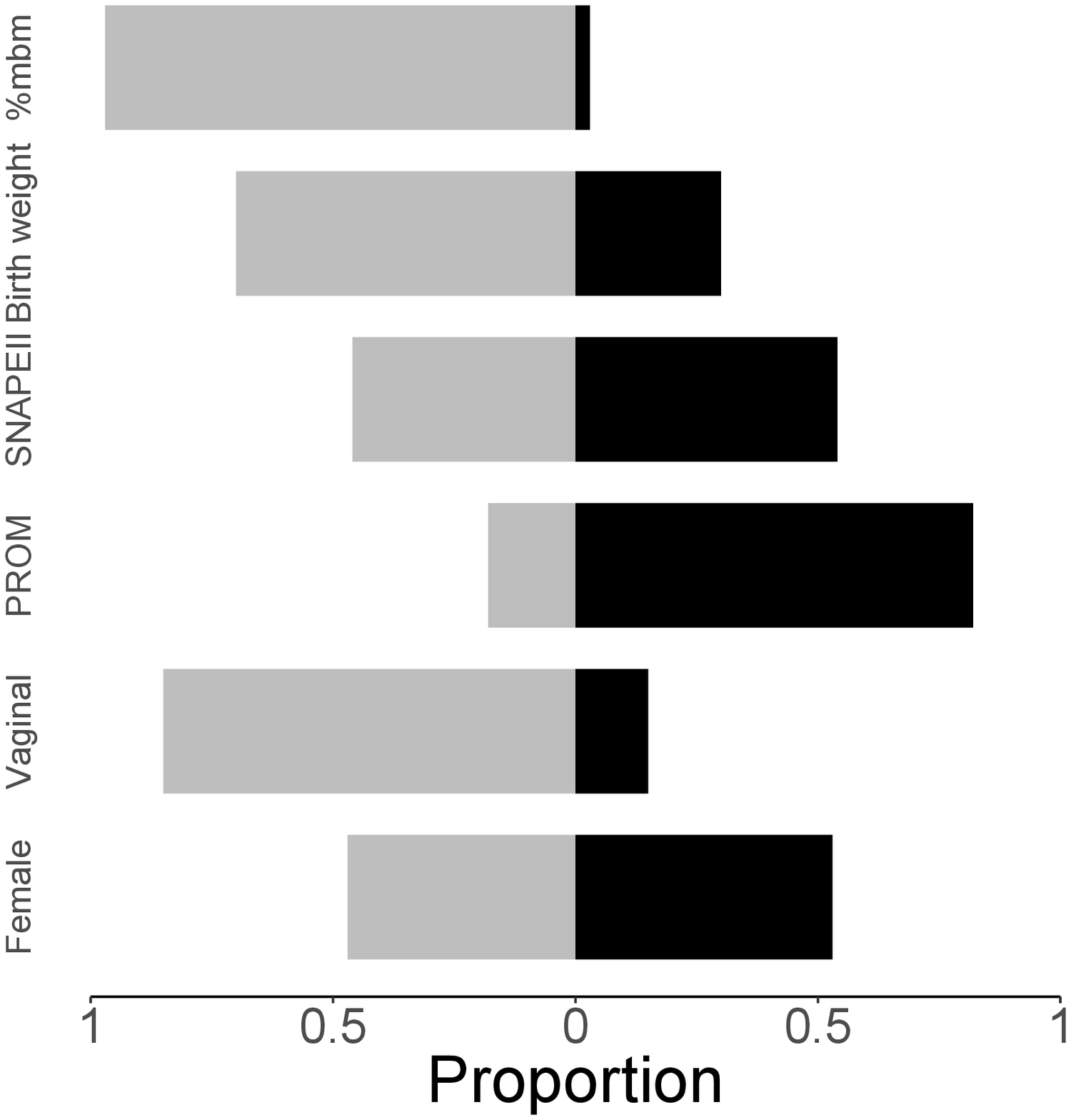}
    \end{subfigure}
    \begin{subfigure}{0.45\textwidth}
        \caption{}
        \includegraphics[width=1\linewidth
        ]{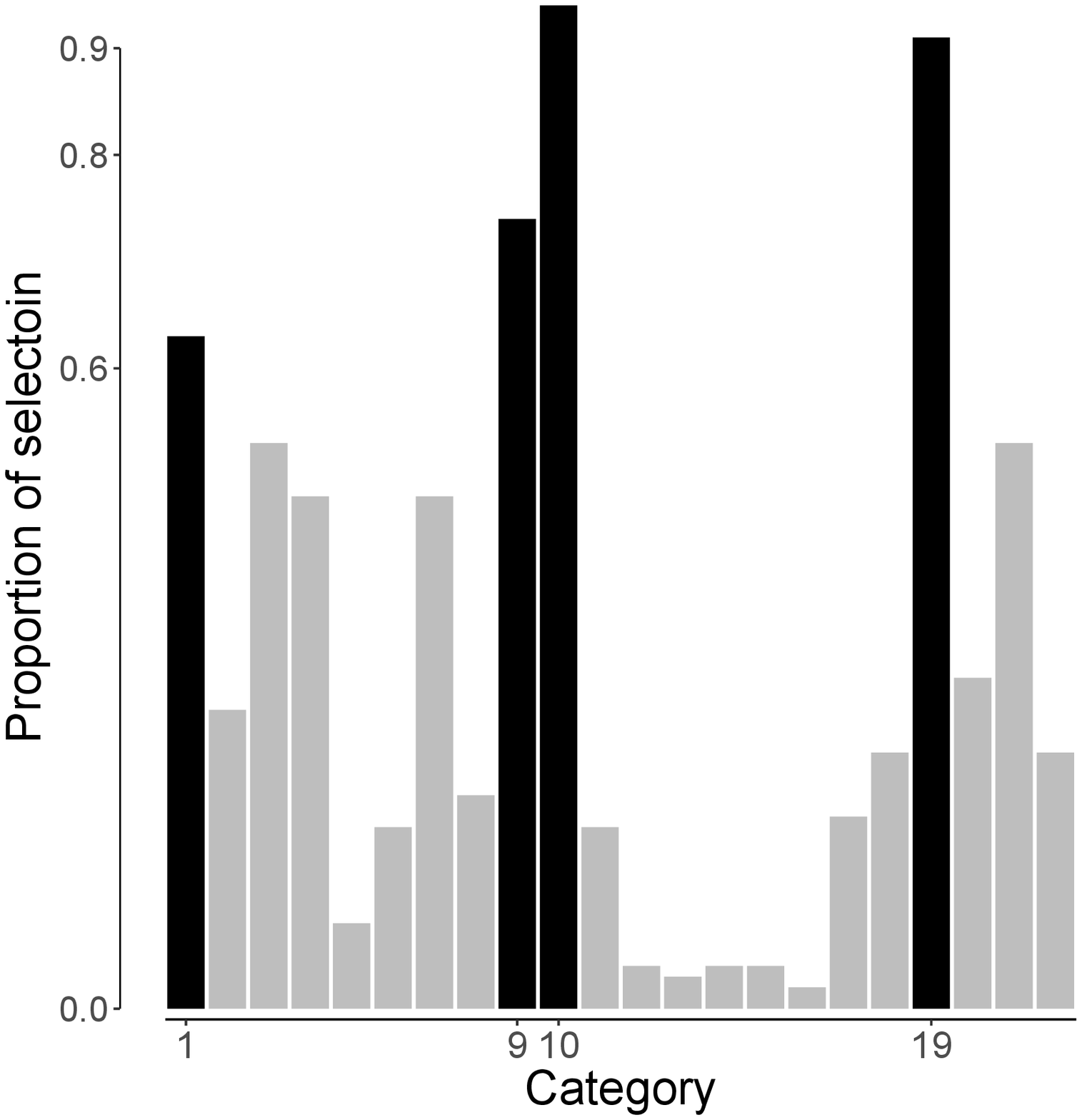}
    \end{subfigure}
\hspace{0.2\textwidth}
    \caption[]{Selection results from 100 bootstrap samples. (a) Proportions of the signs of the estimated coefficients of the control variables. Proportions of positive signs are shown as black blocks to the right, and those of negative signs are shown as light gray blocks to the left. (b) Proportions of selecting the 22 bacteria categories at the order level. The bars of the four selected categories from fitting the original data are colored in black. }
    \label{fig:boot}
\end{figure}

Category 10 consists of Clostridiales, which are an order of bacteria belonging to the phylum Firmicutes. Studies showed that infants fed with mother's milk had significantly higher abundance in Clostridiales \citep{Cong2014}. Clostridiales are generally regarded as hallmarks of a healthy gut; it can be a sign of infection when their subtypes such as Eubacteria die off in the large intestine. Our results show that controlling for other effects in the model, the effect of Clostridiales on the stress score switches from negative to positive during the 
postnatal days from 5 to 28. Category 9 consists of Lactobacillales, or lactic acid bacteria (LAB), another order of bacteria belonging to the phylum Firmicutes. These bacteria are usually found in decomposing plants and milk products; they are considered beneficial and produce organic acids such as lactic acid from carbohydrates. 
Our analysis shows that controlling for the other effects in the model, higher LAB proportions are associated with higher stress scores for a period of time during the early postnatal days. Both Clostridiales and LAB belong to phylum Firmicutes, which make up the largest portion of the human gut microbiome, and the abundance of Firmicutes has been shown to be associated with inflammation and obesity \citet{Clarke2012,Boulange2016}. Category 19 consists of Enterobacteriales, an order of gram-negative bacteria. They are responsible for various infections such as bacteremia, lower respiratory tract infections, skin infections, etc. Category 1 consists of other unclassified bacteria. The functional regression analysis presented here may lead to a better understanding of how the trajectories of gut microbiome during early postnatal stage impact neurobehavioral outcomes of infants through the gut-brain axis.

\begin{figure}[htp!]
    \captionsetup[subfigure]{singlelinecheck=true}
    \centering
    \subcaptionbox[singlelinecheck=true]{Genus 1\label{fig:4_1}}{
        \includegraphics[ width = 5.5cm, height = 5cm]{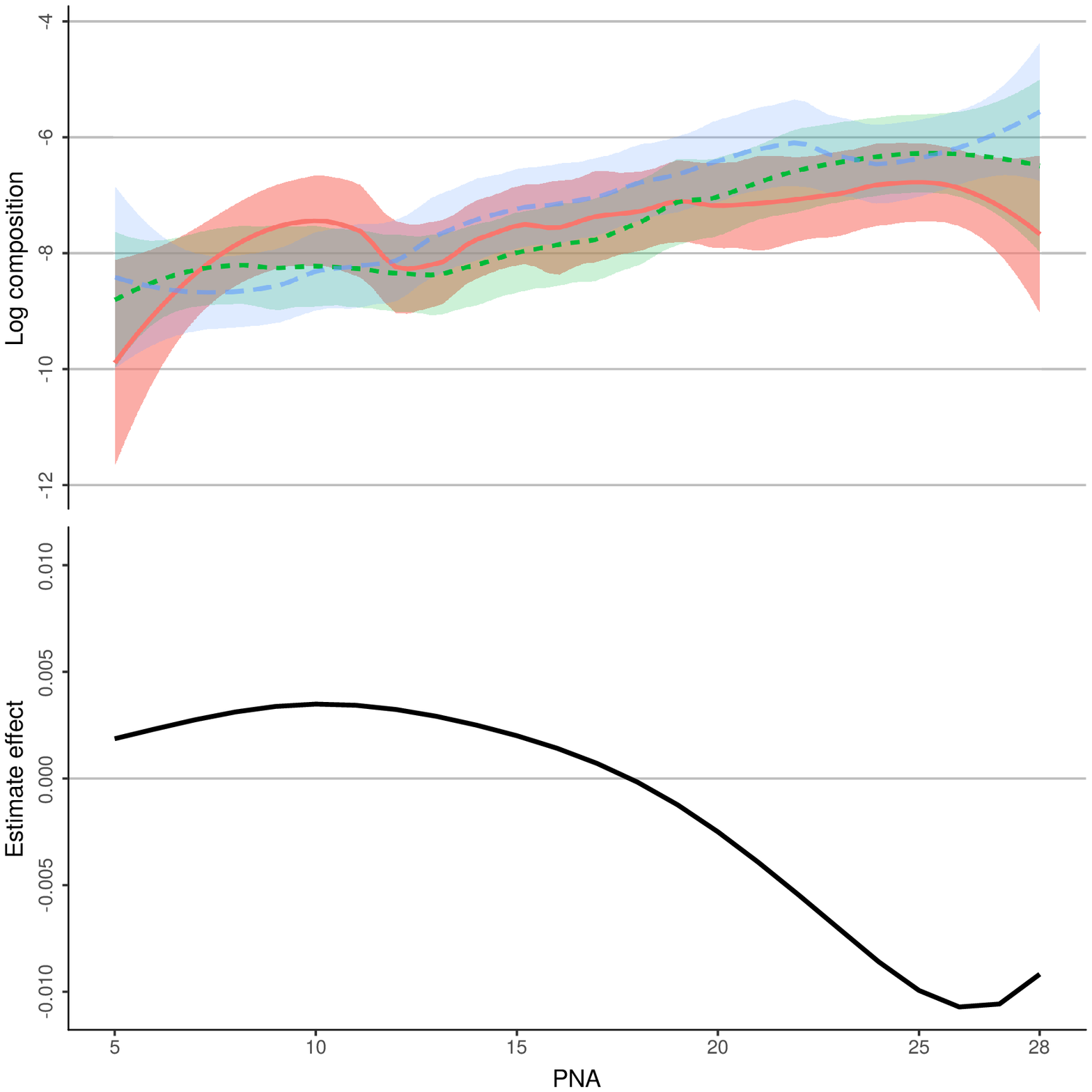}
    }\hspace{0.5cm}%
    \subcaptionbox{Genus 20\label{fig:4_20}}{
        \includegraphics[ width = 5.5cm, height = 5cm]{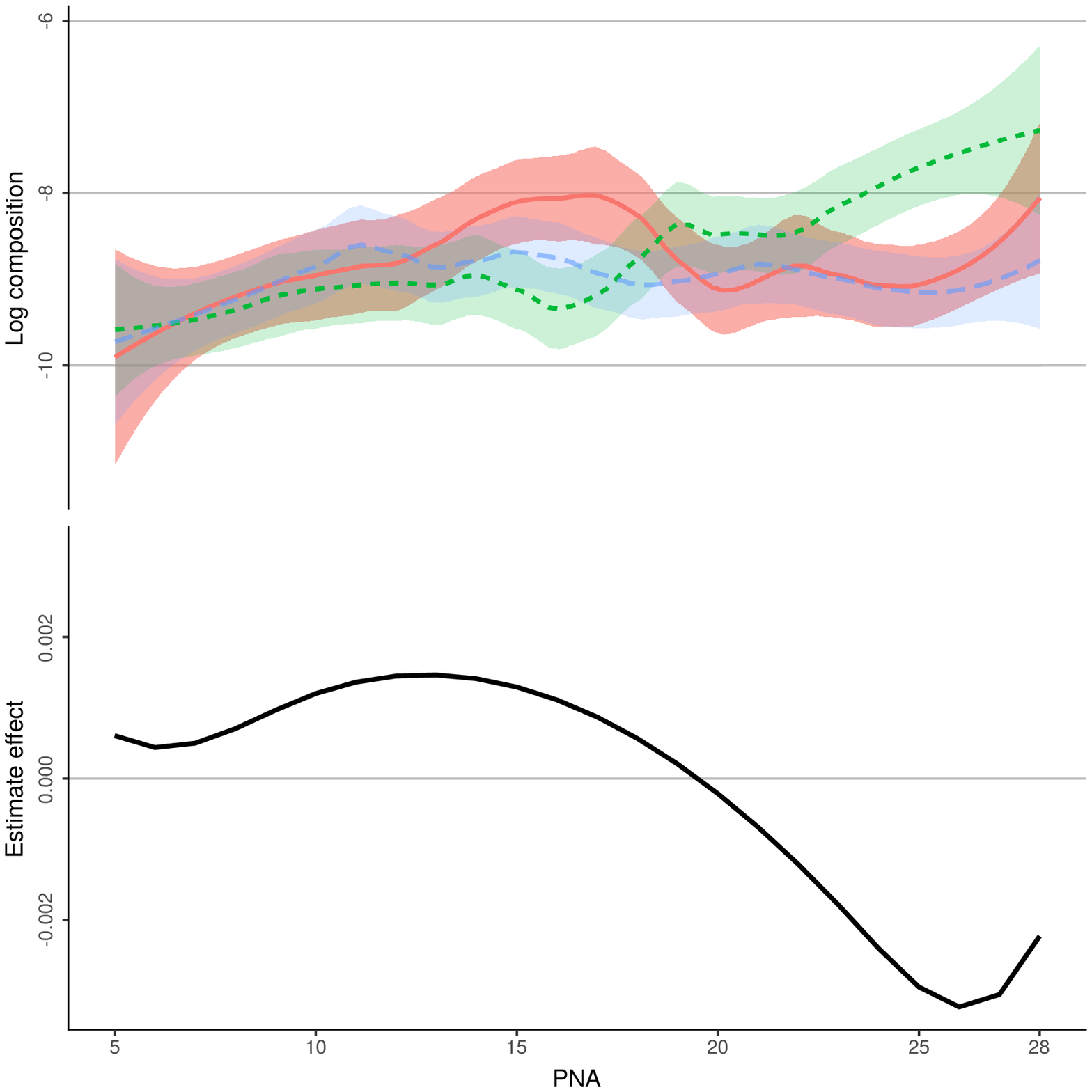}
    }\hspace{0.5cm}%
    \subcaptionbox{Genus 38\label{fig:4_38}}{
        \includegraphics[ width = 5.5cm, height = 5cm]{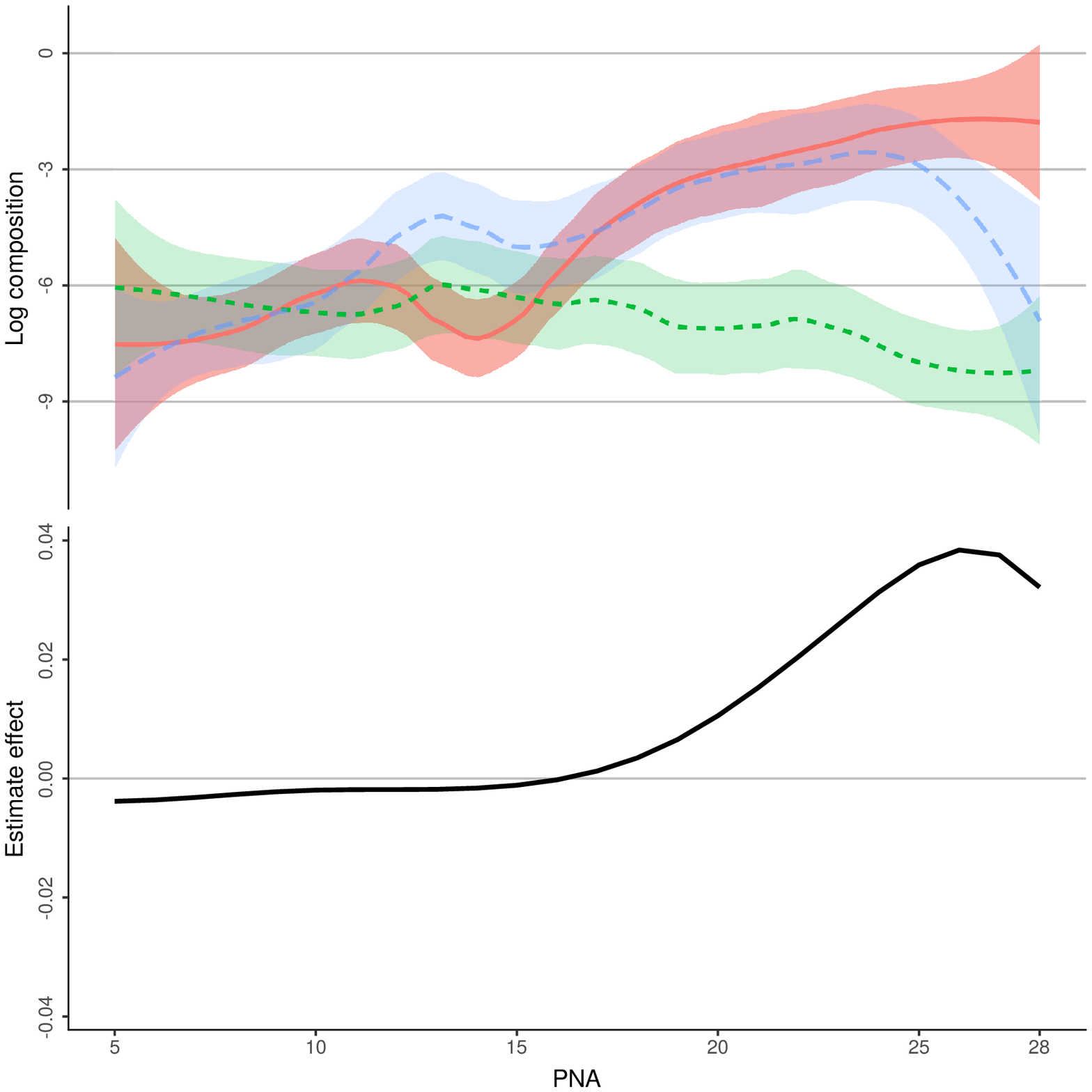}
    }\hspace{0.5cm}%
    \subcaptionbox{Genus 48\label{fig:4_48}}{
        \includegraphics[ width = 5.5cm, height = 5cm]{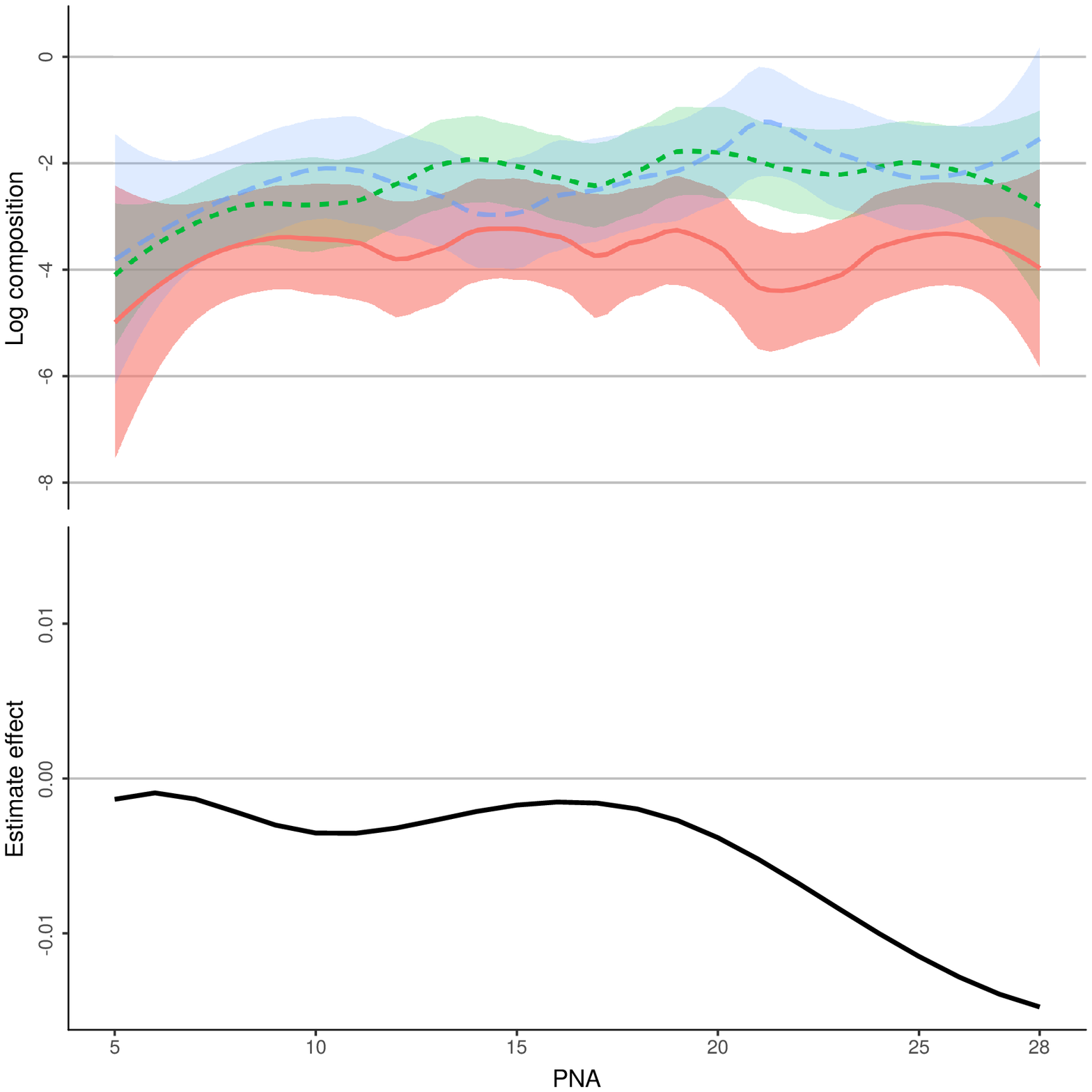}
    }
    \subcaptionbox{Genus 55\label{fig:4_5}}{
    \includegraphics[ width = 5.5cm, height = 5cm]{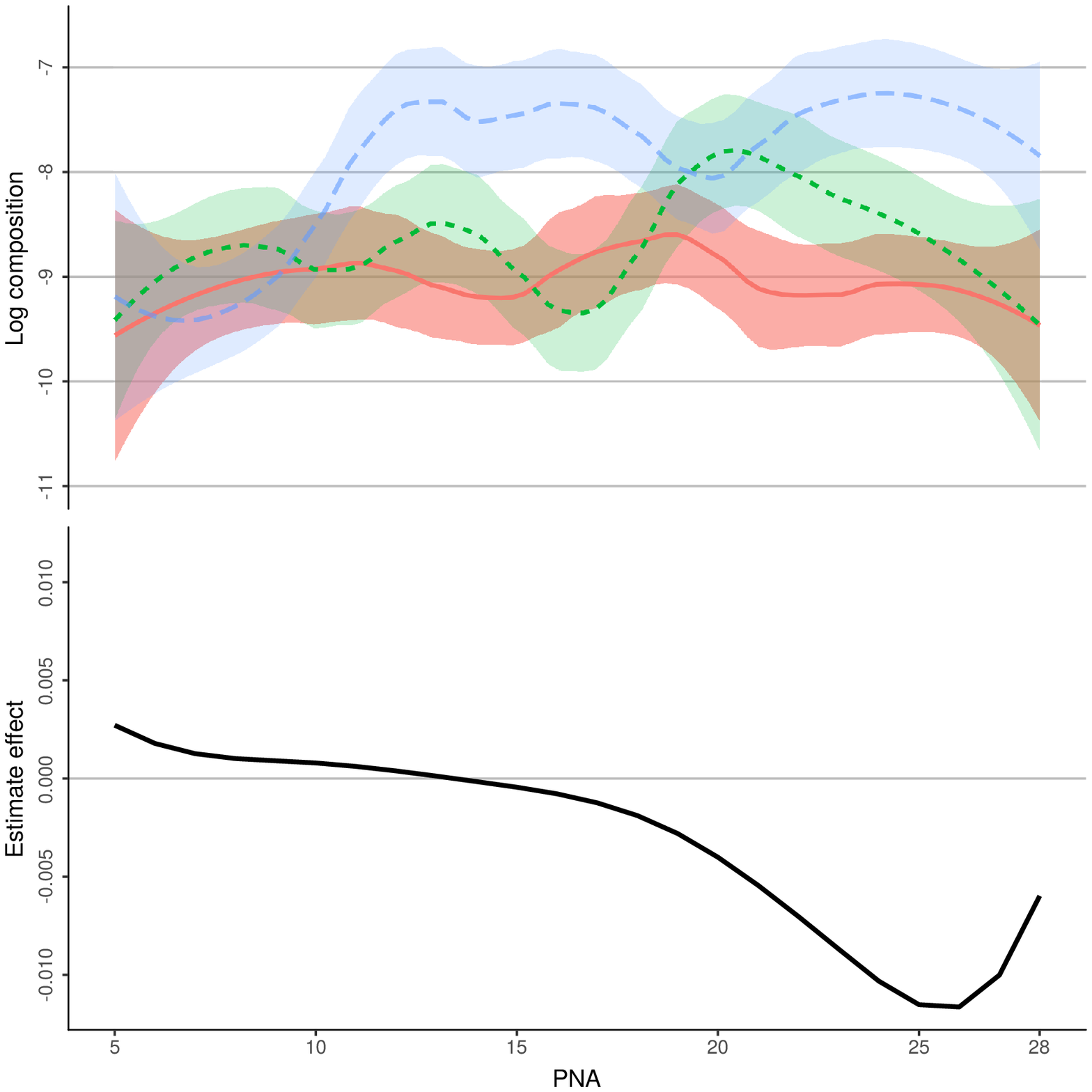}
    }\hspace{6.1cm} \hfill 
    \caption{Estimated effects of the five selected bacteria categories at the genus level over infant's postnatal age (PNA) of 5 to 28 days. The layout is the same as in Figure \ref{fig:phyla}.
    }
    \label{fig:genus}
\end{figure}

We also repeat the analysis on a lower level taxon, i.e., the genus level. Five out of $p=62$ genera are selected, and their estimated functional effects are shown in the five panels of Figure \ref{fig:genus}. The tendency of the estimated effects adequately reflects those of the observed data and the results are consistent with previous study on the order level. In particular, the five selected genera all belong to the four selected order categories; see Table \ref{tbl:sel}. Genus 38 comprises genus Veillonella, belonging to the order Clostridiales. Veillonella have been implicated as pathogens; they are often associated with oral, central nervous system and various soft tissue infections. Our results show that controlling for the other effect in the model, the effect of Veillonella on the stress score works similarly to that of Clostridiales, switching from negative to positive. Genus 20 consists of Enterococcus, which is an large genus of bacteria belonging to the order LAB. In humans, E. faecalis and E. faecium are the most abundant species of this genus found in fecal content, comprising up to 1\% of the adult intestinal microbiota. Although utilization of Enterococci as probiotics has been under controversial discussion, enterococcal strains such as E. faecium SF68 and E. faecalis Symbio-flor have been marketed as probiotics for decades without incidence and with very few reported adverse events \citep{Franz2011}. On the other aspect, Enterococci is also important nosocomial pathogens that cause bacteraemia, endocarditis and other infections. Same as LAB, controlling for the other effect in the model, our results show that higher Enterococcus proportions are associated with higher stress scores for a period of time during the early postnatal days. Genus 55 is Shigella, belonging to the order Enterobacteriales. Shigella is considered as pathogen causing shigellosis. The main sign of shigella infection is diarrhea, which often is bloody. However, shigellosis rarely affects infants during the first month of life. Even in highly endemic areas neonatal shigellosis is exceedingly uncommon \citep{Haltalin1967}. Our analysis shows that controlling for the other effects in the model, the effect of Shigella changes from positive to negative during early postnatal days. Genus 48 consists of other unclassified genera of bacteria that belongs to the order Enterobacteriales. Genus 1 consists of other unclassified bacteria.

\begin{table}[htp!]
\tiny
    \centering
    \caption{Comparison of selection of microbiome markers between order level and genus level.} \label{tbl:sel}
    \begin{tabular}{ll}
        \toprule
         \multicolumn{1}{c}{Order level}     
                     & \multicolumn{1}{c}{Genus level} \\
        \toprule          
        1: \bftab Others       
                     & 1: \bftab Others \\
        \midrule          
        \begin{tabular}[c]{@{}l@{}} 9: \bftab Lactobacillales\\ 
             Produce organic acids such as lactic acid from  \\
             carbohydrates.\\
             \\
            \end{tabular} 
                     & \begin{tabular}[c]{@{}l@{}} 20: \bftab Enterococcus  \\
                         It's used as probiotics in humans; \\
                         It's considered as pathogens that cause bacteraemia,   \\
                         endocarditis and other infections.             \\
                       \end{tabular} \\
        \midrule
        \begin{tabular}[c]{@{}l@{}} 10: \bftab Clostridiales \\
            It's generally regarded as hallmarks of a healthy gut; \\ 
            It's a sign of infection when their subtypes such as   \\
            Eubacteria die off in the large intestine.
        \end{tabular}                                                                                                                                       &   \begin{tabular}[c]{@{}l@{}} 38: \bftab Veillonella    \\
                         It's implicated as pathogens;                  \\
                         It's associated with oral infections and various soft                   \\
                         tissue infections.                  \\
                        \end{tabular}   \\ 
        \midrule
    
            \begin{tabular}[c]{@{}l@{}} 19: \bftab Enterobacteriales \\
                It's responsible for various infections such as  \\
                bacteremia, lower respiratory tract infections, skin \\
                infections, etc.\\
        \end{tabular}
                                                                                                                                                                                      
        & \begin{tabular}[c]{@{}l@{}} 55: \bftab Shigella       \\
            It's considered as pathogen causing shigellosis;\\
            Shigellosis is exceedingly uncommon for infants \\
            during the first month of life\\
            48: \bftab Others\\
        \end{tabular} \\
        \bottomrule
             
    \end{tabular}
\end{table}


\section{Discussion}\label{sec:dis}


We have attempted a functional log-contrast regression approach to identify trajectories of gut microbiome components during early postnatal stage that are associated with later neurobehavioral outcomes of pre-term infants. There are several directions for future research to address the limitations of the current work. 
The results on order and genus levels only give a general idea of how the microbial communities effect health outcomes, to fully decipher their roles further analysis on species level or even operational taxonomic unit (OTU) is needed.
The data analysis can benefit from extending the model to consider potential interactions between the control variables and the gut microbiome, as it is possible, for example, that the effects of certain microbiome markers differ for male and female infants. Extensions to binary outcome or mixture model setup are interesting and could be widely applicable; indeed, it is of interest to see whether there exists a subgroup structure among the preterm infants. To take into account the uncertainty due to discrete observations, it is urgent to develop smoothing or dimension reduction methods such as multivariate functional principal component analysis for compositional data observed discretely over time. A joint modeling approach of both the regression and the functional compositions themselves may also be fruitful.










\section*{Acknowledgments}

Cong's research is supported by by U.S. National Institutes of Health grants NINR K23NR014674 and R01NR016928. Li's research is supported by U.S. National Institutes of Health grant NIDCR R03DE027773. Chen's research is partially supported by U.S. National Science Foundation grants DMS-1613295 and IIS-1718798. The authors thank the medical and nursing staff in the NICUs of Connecticut Children’s Medical Center at Hartford and Farmington, CT for their support and assistance. 


\clearpage
\appendix

\section*{Supplemental Materials}

\section{Computational Algorithm}\label{sec:supp:comp}


We consider an estimation criterion that is slightly more general than \eqref{eq:glasso} in the main paper,
\begin{align*}
\min_{\beta_0,\bbeta_c,\bbeta}\left\{\frac{1}{2n}\| \y -\beta_0\1_n - \Z_c\bbeta_c - \Z\bbeta\|^2  + \lambda \sum_{j=1}^{p} \|\W_j\bbeta_j\|\right\}, \mbox{s.t.} \sum_{j=1}^{p}\A_j\bbeta_j=\b,
\end{align*}
where each $\W_j \in \mathbb{R}^{k\times k}$ is invertable, e.g., a diagonal matrix with positive diagonal elements, and the linear constraints, with choices of conformable $\A_j$s and $\b$, remain feasible, i.e., $\{\bbeta; \sum_{j=1}^{p}\A_j\bbeta_j=\b\} \neq \emptyset$. The problem is convex and can be solved by an augmented Lagrangian algorithm \citep{Boyd2011}.





To derive the algorithm, we first construct the scaled augmented Lagrangian function
\begin{align*}
  L(\bbeta_{0c},\bbeta;\balpha,\mu)  =  & \frac{1}{2n}\| \y -\beta_0\1_n - \Z_c\bbeta_c - \Z\bbeta\|^2\\
 & + \frac{\mu}{2} \| \sum_{j=1}^{p}\A_j\bbeta_j -\b + \balpha \|^2
 + \lambda \sum_{j=1}^{p} \|\W_j\bbeta_j\|,
\end{align*}
where $\mu>0$ is a prespecified penalty parameter, $\balpha \in \mathbb{R}^k$ is the scaled Lagrange multiplier, and $\bbeta_{0c} = (\beta_0\trans, \bbeta_c\trans)\trans$ collects the unpenalized coefficients.

The algorithm alternates between two steps, a primal step and a dual step, until convergence. Let $\ell = 0, 1, ...$ denote the iteration number. The primal step minimizes $L(\bbeta_{0c},\bbeta;\balpha,\mu)$ with respect to $(\bbeta_{0c}, \bbeta)$: $\big(\bbeta_{0c}^{\ell+1},\bbeta^{\ell+1}\big)\leftarrow
\min_{\bbeta_{0c},\bbeta} \{L (\bbeta_{0c},\bbeta;\balpha^{\ell},\mu)\}$.
The problem is equivalent to a standard group lasso problem, for which many algorithms are available \citep{huang2012}. To see this, consider
\begin{align*}
\arg\min_{\bbeta} \left\{\frac{1}{2n}\| \y - \Z\bbeta\|^2 + \frac{\mu}{2} \| \sum_{j=1}^{p}\A_j\bbeta_j -\b + \balpha^{\ell} \|^2 + \lambda \sum_{j=1}^{p} \|\W_j\bbeta_j\| \right\}.
\end{align*}
Here we have omitted the intercept term and the control variables as they can be treated as a group with zero penalty. Define $\A = (\A_1,\ldots, \A_p)$, 
$\widetilde{\bbeta}_j = \W_j\bbeta_j$, and $\widetilde{\bbeta} = \W\bbeta =\mbox{diag}(\W_1,\ldots,\W_p)\bbeta$. Then the objective can be expressed in terms of $\widetilde{\bbeta}$ as
\begin{align*} 
  & \frac{1}{2n}\widetilde{\bbeta}\trans(\W^{-1})\trans(\Z\trans\Z + n\mu \A\trans\A)\W^{-1}\widetilde{\bbeta} \\
& -
\frac{1}{n}(\y\trans\Z + n\mu(\b-\balpha^{\ell})\trans \A)\W^{-1}\widetilde{\bbeta} +
\lambda \sum_{j=1}^{p} \|\widetilde{\bbeta}_j\|.
\end{align*}
The dual step updates $\balpha$ as
$\balpha^{\ell+1}
 \leftarrow \balpha^\ell+ \sum_{j=1}^{p}\A_j\bbeta_j^{\ell+1} - \b$. To speed up computation, $\mu$ can be set to slowly increase along iterations \citep{Boyd2011}.

The optimization procedure for any fixed $\lambda$ is summarized in Algorithm \ref{alg:ALM}. When the model is fitted for a sequence of $\lambda$ values, a warm start strategy is adopted, i.e., the solution for the previous $\lambda$ value is used as the initial value for the next one.


\begin{algorithm}[h]
\caption{}\label{alg:ALM}
\begin{algorithmic}
\State Initialize $\balpha^{0} \geq \0$, $\mu^{0} \geq0$. Choose $\rho>1$, e.g., $\rho = 1.05$. Choose convergence thresholds $\epsilon_1>0$ and $\epsilon_2>0$, e.g., $\epsilon_1 = \epsilon_2 = 10^{-4}$. Set $\ell \gets 0$.
\Repeat 
\bi
\item Primal step: $\big(\bbeta_{0c}^{\ell+1},\bbeta^{\ell+1}\big)\leftarrow
\min_{\bbeta_{0c},\bbeta} \{L (\bbeta_{0c},\bbeta;\balpha^{\ell},\mu^{\ell})\}$.
\item Dual step:
    $\balpha^{\ell+1} \leftarrow (\balpha^\ell+\sum_{j=1}^{p}\A_j\bbeta_j^{\ell+1} - \b)/\rho$.
\item $\mu^{\ell+1} \gets \rho\mu^{\ell}$.
\item  $\ell \gets \ell+1$.
\ei
\Until{convergence, i.e., $\big(\|\bbeta_{0c}^{\ell+1} - \bbeta_{0c}^{\ell} \|^2  + \|\bbeta^{\ell+1} - \bbeta^{\ell}\|^2 \big)/\big( \|\bbeta_{0c}^{\ell}\|^2 + \|\bbeta^{\ell} \|^2\big)\leq \epsilon_1$ and $\|\sum_{j=1}^p \A_j\bbeta_j^{\ell+1} - \b\|^2 \le \epsilon_2$.}\\
\Return{$\widehat{\bbeta}_{0c} = \bbeta_{0c}^{\ell}$ and  $\widehat{\bbeta} = \bbeta^{\ell}$.}
\end{algorithmic}
\end{algorithm}

\clearpage

\section{Proofs}\label{sec:supp:th}

We study the properties of the constrained group lasso
estimator,
\begin{align}\label{supp:eq:glasso2}
\widehat{\bbeta} = \arg\min_{\bbeta}\left\{\frac{1}{2n}\| \y - \Z\bbeta\|^2  + \lambda \sum_{j=1}^{p} \|\bbeta_j\|\right\}, \qquad \mbox{s.t.} \sum_{j=1}^{p}\bbeta_j=\0.
\end{align}
as define in \eqref{eq:glasso2} of the main paper. For the sake of completeness, we reproduce the theorems in the main paper. 

\begin{theorem}[Error Bounds]\label{supp:th:1}
Suppose Assumptions \ref{as:2}--\ref{as:3} presented in the main paper hold. Choose
$$
\lambda \geq \min_{r}\max_{j\neq r}\frac{2\sigma}{\sqrt{n}} \sqrt{\mbox{tr}(\bPsi_{\bar{r},j}) + 2\sigma_{\max}(\bPsi_{\bar{r},j})(2q\log(p-1)+\sqrt{kq\log(p-1)})}.
$$
Then, with probability at least $1-2(p-1)^{1-q}$, the constrained group lasso estimator $\widehat{\bbeta}$ in \eqref{supp:eq:glasso2} satisfies that
\begin{align}
\frac{1}{n}\|\Z(\widehat{\bbeta} - \bbeta^*)\|^2 \leq \frac{16\lambda^2s^*}{\kappa^2},\label{supp:eq:th11}\\
\sum_{j=1}^{p}\|\widehat{\bbeta}_j - \bbeta_j^*\| + \min_j \|\widehat{\bbeta}_j - \bbeta_j^*\| \leq \frac{16\lambda s^*}{\kappa^2}.\label{supp:eq:th12}
\end{align}
\end{theorem}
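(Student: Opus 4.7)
The plan is to adapt the standard group-lasso error bound argument of \citet{Lounici2011} to the linearly constrained setting, exploiting the identity $\Z\bDelta=\widetilde{\Z}_{\bar r}\bDelta_{\bar r}$ that follows from $\sum_j\bDelta_j=\0$. I will proceed in four steps: a basic inequality from optimality of $\widehat{\bbeta}$, a high-probability tail bound on the noise cross term via a Gaussian chi-square inequality, derivation of the cone condition that exactly matches Assumption~\ref{as:3}, and a final application of the RE condition to convert everything into the stated bounds.

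First, setting $\bDelta=\widehat{\bbeta}-\bbeta^*$, using $\y=\Z\bbeta^*+\be$, and rearranging the optimality of $\widehat{\bbeta}$ will yield
$$
\frac{1}{2n}\|\Z\bDelta\|^2 + \lambda\sum_{j=1}^p\|\widehat{\bbeta}_j\| \;\leq\; \frac{1}{n}\be\trans\Z\bDelta + \lambda\sum_{j=1}^p\|\bbeta_j^*\|.
$$
Since both $\bbeta^*$ and $\widehat{\bbeta}$ are feasible, for every $r$ one has $\bDelta_r=-\sum_{j\neq r}\bDelta_j$. Because $\widetilde{z}_{ij}(t)=z_{ij}(t)-z_{ir}(t)$ implies $\widetilde{\Z}_{\bar r,j}=\Z_j-\Z_r$ after basis expansion, this gives the key identity $\Z\bDelta=\widetilde{\Z}_{\bar r}\bDelta_{\bar r}$ for every $r$, and therefore $|\be\trans\Z\bDelta|/n\leq \max_{j\neq r}(\|\widetilde{\Z}_{\bar r,j}\trans\be\|/n)\sum_{j\neq r}\|\bDelta_j\|$, for \emph{any} chosen $r$.

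For the noise term, each $\|\widetilde{\Z}_{\bar r,j}\trans\be\|^2$ is a $k$-term Gaussian quadratic form whose eigenvalues are those of $\widetilde{\Z}_{\bar r,j}\widetilde{\Z}_{\bar r,j}\trans/n$, so a Laurent--Massart chi-square tail bound with deviation parameter $q\log(p-1)$ will produce exactly the three-term quantity $\mbox{tr}(\bPsi_{\bar r,j})+2\sigma_{\max}(\bPsi_{\bar r,j})(2q\log(p-1)+\sqrt{kq\log(p-1)})$ inside the square root. A union bound over the $p-1$ indices $j\neq r$ followed by taking the minimum over $r$ (which matches the outer minimization in the theorem's choice of $\lambda$) will yield the overall failure probability $2(p-1)^{1-q}$. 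On the resulting favourable event, $\lambda/2\geq\max_{j\neq r^*}\|\widetilde{\Z}_{\bar r^*,j}\trans\be\|/n$ for the $r^*$ attaining the outer minimum.

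The hard part will be the cone condition, because Assumption~\ref{as:3} carries the extra slack $\min_j\|\bDelta_j\|$ on its left-hand side that does not appear in unconstrained group-lasso analyses. Starting from the basic inequality on the favourable event and applying $\|\widehat{\bbeta}_j\|\geq\|\bbeta_j^*\|-\|\bDelta_j\|$ for $j\in\mathcal{S}$ with $\|\widehat{\bbeta}_j\|=\|\bDelta_j\|$ for $j\in\mathcal{S}^c$, together with the restriction of the noise bound to the sum over $j\neq r^*$, I expect a short case analysis on whether $r^*\in\mathcal{S}$ or $r^*\in\mathcal{S}^c$ to deliver
$$
\frac{1}{2n}\|\Z\bDelta\|^2 + \frac{\lambda}{2}\sum_{j\in\mathcal{S}^c}\|\bDelta_j\| + \frac{\lambda}{2}\|\bDelta_{r^*}\| \;\leq\; \frac{3\lambda}{2}\sum_{j\in\mathcal{S}}\|\bDelta_j\|.
$$
Since $\|\bDelta_{r^*}\|\geq\min_j\|\bDelta_j\|$, the cone condition follows directly, regardless of which set $r^*$ sits in. From here, Cauchy--Schwarz gives $\sum_{j\in\mathcal{S}}\|\bDelta_j\|\leq\sqrt{s^*}\|\bDelta_{\mathcal{S}}\|$, and invoking the RE condition with $\mathcal{J}=\mathcal{S}$ converts the prediction term to an estimation term, producing \eqref{supp:eq:th11}; combining this with the displayed inequality and the cone condition then yields the $\ell_1$-type estimation bound \eqref{supp:eq:th12}, up to the stated universal constant absorbing slack from the bookkeeping.
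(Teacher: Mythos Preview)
Your proposal is correct and follows essentially the same route as the paper: optimality gives the basic inequality, the zero-sum constraint converts $\Z\bDelta$ to $\widetilde{\Z}_{\bar r}\bDelta_{\bar r}$ so the noise cross term is bounded by $\lambda$ times a sum over $j\neq r$ via the Lounici et al.\ chi-square tail, and dropping the $r$th block is precisely what produces the extra $\min_j\|\bDelta_j\|$ slack in the cone condition of Assumption~\ref{as:3}. The only cosmetic difference is that the paper writes the noise bound as $\lambda\max_r\sum_{j\neq r}\|\bDelta_j\|=\lambda\bigl(\sum_j\|\bDelta_j\|-\min_j\|\bDelta_j\|\bigr)$ directly, so no case split on whether $r^*\in\mathcal{S}$ is ever needed (and indeed your displayed inequality holds uniformly without one).
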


\begin{corollary}[Selection Consistency]\label{supp:th:2}
Suppose Assumptions \ref{as:2}--\ref{as:4} presented in the main paper hold. Let
$$
\widehat{\mathcal{S}} = \{j: \|\widehat{\bbeta}_j\|> \frac{8\lambda s^*}{\kappa^2}\}.
$$
Then, with probability at least $1-2(p-1)^{1-q}$, we have that $\widehat{\mathcal{S}} = \mathcal{S}$.
\end{corollary}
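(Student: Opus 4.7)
}
The plan is to combine the estimation error bound from Theorem \ref{supp:th:1} with the $\beta$-min condition (Assumption \ref{as:4}), and then handle relevant and irrelevant components separately.

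First, I would invoke Theorem \ref{supp:th:1}: on the event $\mathcal{E}$ of probability at least $1-2(p-1)^{1-q}$, the bound
\[
\sum_{j=1}^{p}\|\widehat{\bbeta}_j - \bbeta_j^*\| + \min_j \|\widehat{\bbeta}_j - \bbeta_j^*\| \leq \frac{16\lambda s^*}{\kappa^2}
\]
holds. All subsequent arguments are carried out on $\mathcal{E}$. Since every term on the left-hand side is non-negative, one immediately extracts the individual bound
\[
\|\widehat{\bbeta}_j - \bbeta_j^*\| \leq \sum_{j'=1}^p \|\widehat{\bbeta}_{j'} - \bbeta_{j'}^*\| \leq \frac{16\lambda s^*}{\kappa^2} - \min_{j'}\|\widehat{\bbeta}_{j'} - \bbeta_{j'}^*\|
\]
for every $j$, and I intend to exploit the extra $\min_j$ slack, which is a feature of the constrained formulation, to sharpen this to the effective individual bound of $\tfrac{8\lambda s^*}{\kappa^2}$ needed for the thresholding argument.

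Next I would split into two cases based on membership in $\mathcal{S}$. For $j \in \mathcal{S}$, the reverse triangle inequality yields $\|\widehat{\bbeta}_j\| \geq \|\bbeta_j^*\| - \|\widehat{\bbeta}_j - \bbeta_j^*\|$; combined with Assumption \ref{as:4} (which lower-bounds $\|\bbeta_j^*\|$) and the sharpened individual error bound, this gives $\|\widehat{\bbeta}_j\| > \tfrac{8\lambda s^*}{\kappa^2}$, placing $j$ in $\widehat{\mathcal{S}}$. For $j \notin \mathcal{S}$, the identity $\bbeta_j^* = \mathbf{0}$ turns the individual error bound directly into $\|\widehat{\bbeta}_j\| = \|\widehat{\bbeta}_j - \bbeta_j^*\| \leq \tfrac{8\lambda s^*}{\kappa^2}$, so $j \notin \widehat{\mathcal{S}}$. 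Combining the two cases gives $\widehat{\mathcal{S}} = \mathcal{S}$ on $\mathcal{E}$.

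The main obstacle, as hinted above, is bookkeeping the constants: a naive application of the sum bound only delivers $\|\widehat{\bbeta}_j - \bbeta_j^*\| \leq \tfrac{16\lambda s^*}{\kappa^2}$, which paired with the $\beta$-min threshold $\tfrac{16\lambda s^*}{\kappa^2}$ would just fail to separate the two cases at the chosen threshold $\tfrac{8\lambda s^*}{\kappa^2}$. The trick is that the $\min_j$ term on the left of the Theorem \ref{supp:th:1} bound acts as extra slack: if any single $\|\widehat{\bbeta}_{j} - \bbeta_{j}^*\|$ exceeds $\tfrac{8\lambda s^*}{\kappa^2}$, then on $\mathcal{E}$ the $\min_j$ term is forced below $\tfrac{8\lambda s^*}{\kappa^2}$, and I expect a short bookkeeping argument (using that the same $\min$ term appears in the cone condition of Assumption \ref{as:3}) to close the gap. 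Once the individual bound $\|\widehat{\bbeta}_j - \bbeta_j^*\| \leq \tfrac{8\lambda s^*}{\kappa^2}$ is obtained for all $j$, the two cases go through as described and the corollary follows.
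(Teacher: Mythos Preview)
Your overall architecture matches the paper's: reduce to an individual block bound $\|\widehat{\bbeta}_j-\bbeta_j^*\|\le 8\lambda s^*/\kappa^2$ for every $j$, then handle $j\in\mathcal S$ and $j\notin\mathcal S$ separately via the reverse triangle inequality and the $\beta$-min condition. The paper's proof does exactly this, but it simply asserts the $\ell_{2,\infty}$ bound as a consequence of Theorem~\ref{supp:th:1} without spelling out how the constant halves; you are right to flag this as the only nontrivial step.

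Where your plan goes astray is the mechanism you propose for halving the constant. The ``$\min_j$ slack'' in the Theorem~\ref{supp:th:1} bound cannot do it by itself: as a purely numerical inequality on nonnegative reals, $\sum_j a_j+\min_j a_j\le 16\lambda s^*/\kappa^2$ is compatible with one $a_j$ equal to $16\lambda s^*/\kappa^2$ and all others zero, so no amount of bookkeeping with the $\min$ term (or with the cone inequality in Assumption~\ref{as:3}, which has the same flavor) will force $\max_j a_j\le 8\lambda s^*/\kappa^2$. The missing ingredient is the \emph{zero-sum constraint} itself, which both $\widehat{\bbeta}$ and $\bbeta^*$ satisfy. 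Writing $\bDelta_j=\widehat{\bbeta}_j-\bbeta_j^*$, one has $\sum_{j=1}^p\bDelta_j=\0$, so for every $j$,
\[
\|\bDelta_j\|=\Bigl\|\sum_{j'\neq j}\bDelta_{j'}\Bigr\|\le\sum_{j'\neq j}\|\bDelta_{j'}\|=\sum_{j'=1}^p\|\bDelta_{j'}\|-\|\bDelta_j\|,
\]
whence $2\|\bDelta_j\|\le\sum_{j'}\|\bDelta_{j'}\|\le 16\lambda s^*/\kappa^2$ by Theorem~\ref{supp:th:1}. This delivers $\|\bDelta_j\|\le 8\lambda s^*/\kappa^2$ for all $j$ without using the $\min$ term at all. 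With this one-line observation in place, your two-case argument goes through exactly as you wrote it.
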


\begin{proof}[Proof of Theorem \ref{supp:th:1}]
For all $\bbeta = [\bbeta_1\trans,\ldots, \bbeta_p\trans]\trans \in \mathbb{R}^{pk}$, $\sum_{j=1}^{p}\bbeta_j=\0$, it holds that
\begin{align*}
\frac{1}{n}\|\y - \Z\widehat{\bbeta}\|^2 + 2\lambda \sum_{j=1}^{p}\|\widehat{\bbeta}_j\|
\leq  \frac{1}{n}\|\y - \Z\bbeta\|^2 + 2\lambda \sum_{j=1}^{p}\|\bbeta_j\|,
\end{align*}
by the optimality of the constrained group lasso estimator $\widehat{\bbeta}$. Using $\y = \Z\bbeta^*  + \e$, we have that
\begin{align}\label{eq:basicineq}
\frac{1}{n}\|\Z(\widehat{\bbeta} - \bbeta^*)\|^2
\leq \frac{1}{n}\|\Z(\bbeta - \bbeta^*)\|^2 + \frac{2}{n}\e\trans\Z(\widehat{\bbeta} - \bbeta)+ 2\lambda\sum_{j=1}^{p}(\|\bbeta_j\| - \|\widehat{\bbeta}_j\|).
\end{align}

We first bound the stochastic term $\e\trans\Z(\widehat{\bbeta} - \bbeta)$. Due to the zero-sum constrains, it is important to realize that for any $r = 1,\ldots, p$,
$$
\e\trans\Z(\widehat{\bbeta} - \bbeta) = \e\trans\widetilde{\Z}_{\bar{r}}(\widehat{\bbeta}_{\bar{r}} - \bbeta_{\bar{r}}).
$$
The following tail bound is from Lemma A.1 in \citet{Lounici2011}.

\begin{lemma}\label{lemma:1}
Let $\v = [v_1,\ldots,v_n]\trans \neq \0$, $\eta_{\v} = \sum_{i=1}^{n}(e_i^2-1)v_i/(\sqrt{2}\|\v\|)$ 
, and $m(\v) = \|\v\|_\infty /\|\v\|$. Then, under Assumption \ref{as:2} in the main paper, for all $t>0$,
$$
\mathbb{P}(|\eta_{\v}| > t) \leq 2 \exp\left(-\frac{t^2}{2(1+\sqrt{2}tm(\v))}\right).
$$
\end{lemma}

For any fixed $r$, it can be shown using Lemma \ref{lemma:1} \citep{Lounici2011} that if we choose
$\lambda \geq \lambda_r$, where
$$
\lambda_r = \max_{j\neq r}\frac{2\sigma}{\sqrt{n}} \sqrt{\mbox{tr}(\bPsi_{\bar{r},j}) + 2\sigma_{\max}(\bPsi_{\bar{r},j})(2q\log(p-1)+\sqrt{kq\log(p-1)})},
$$
then with probability at least $1-2(p-1)^{1-q}$,
$$
\frac{2}{n}\e\trans\Z(\widehat{\bbeta} - \bbeta)\leq \lambda \sum_{j\neq r}^p\|\widehat{\bbeta}_j - \bbeta_j\|.
$$
Therefore, as long as we choose $\lambda \geq \min_r \lambda_r$, the preceding inequality holds for some $r$; it then follows that with probability at least $1-2(p-1)^{1-q}$, we have
$$
\frac{2}{n}\e\trans\Z(\widehat{\bbeta} - \bbeta)\leq \lambda \max_{r}\sum_{j\neq r}^p\|\widehat{\bbeta}_j - \bbeta_j\|.
$$

By \eqref{eq:basicineq}, we get
$$
\frac{1}{n}\|\Z(\widehat{\bbeta} - \bbeta^*)\|^2
\leq \frac{1}{n}\|\Z(\bbeta - \bbeta^*)\|^2 + \lambda \max_{r}\sum_{j\neq r}^p\|\widehat{\bbeta}_j - \bbeta_j\|
+ 2\lambda\sum_{j=1}^{p}(\|\bbeta_j\| - \|\widehat{\bbeta}_j\|).
$$
It then follows that
\begin{align*}
&\frac{1}{n}\|\Z(\widehat{\bbeta} - \bbeta^*)\|^2 + \lambda \sum_{j=1}^{p}\|\widehat{\bbeta}_j - \bbeta_j\| + \lambda \min_j \|\widehat{\bbeta}_j - \bbeta_j\|\notag\\
\leq & \frac{1}{n}\|\Z(\bbeta - \bbeta^*)\|^2
+ 2\lambda\sum_{j=1}^{p}(\|\bbeta_j\| - \|\widehat{\bbeta}_j\|+\|\widehat{\bbeta}_j - \bbeta_j\| ).
\end{align*}
Now take $\bbeta= \bbeta^*$, we get that
\begin{align}
&\frac{1}{n}\|\Z(\widehat{\bbeta} - \bbeta^*)\|^2 + \lambda \sum_{j=1}^{p}\|\widehat{\bbeta}_j - \bbeta_j^*\| + \lambda \min_j \|\widehat{\bbeta}_j - \bbeta_j^*\|\notag\\
& \leq 4\lambda\sum_{j\in \mathcal{S}}\min(\|\bbeta_j^*\|, \|\widehat{\bbeta}_j - \bbeta_j^*\|).\label{eq:ineq2}
\end{align}

The inequality in \eqref{eq:ineq2} implies that
$$
\lambda\sum_{j=1}^{p}\|\widehat{\bbeta}_j - \bbeta_j^*\| + \lambda \min_j \|\widehat{\bbeta}_j - \bbeta_j^*\|
\leq 4\lambda\sum_{j\in \mathcal{S}}\|\widehat{\bbeta}_j - \bbeta_j^*\|,
$$
which is equivalent to
$$
\sum_{j\in \mathcal{S}^c}\|\widehat{\bbeta}_j - \bbeta_j^*\| + \min_j \|\widehat{\bbeta}_j - \bbeta_j^*\|
\leq 3\sum_{j\in \mathcal{S}}\|\widehat{\bbeta}_j - \bbeta_j^*\|.
$$
Therefore, by the restricted eigenvalue condition in Assumption \ref{as:2} in the main paper, we know that
\begin{align}
\|\widehat{\bbeta}_{\mathcal{S}} - \bbeta_{\mathcal{S}}^*\| \leq \frac{\|\Z(\widehat{\bbeta} - \bbeta^*)\|}{\kappa\sqrt{n}}.\label{eq:ineq3}
\end{align}

It follows from \eqref{eq:ineq2}--\eqref{eq:ineq3} that
\begin{align*}
\frac{1}{n}\|\Z(\widehat{\bbeta} - \bbeta^*)\|^2
& \leq 4\lambda\sum_{j\in \mathcal{S}}\|\widehat{\bbeta}_j - \bbeta_j^*\|\\
& \leq 4 \lambda \sqrt{s^*} \|\widehat{\bbeta}_{\mathcal{S}} - \bbeta_{\mathcal{S}}^*\|\\
& \leq 4 \lambda \sqrt{s^*}\frac{\|\Z(\widehat{\bbeta} - \bbeta^*)\|}{\kappa\sqrt{n}},
\end{align*}
which leads to \eqref{supp:eq:th11}. Also,
\begin{align*}
\sum_{j=1}^{p}\|\widehat{\bbeta}_j - \bbeta_j^*\| + \min_j \|\widehat{\bbeta}_j - \bbeta_j^*\|
& \leq 4\sum_{j\in \mathcal{S}}\|\widehat{\bbeta}_j - \bbeta_j^*\|\\
& \leq 4\sqrt{s^*} \|\widehat{\bbeta}_{\mathcal{S}} - \bbeta_{\mathcal{S}}^*\|\\
& \leq 4\sqrt{s^*}\frac{\|\Z(\widehat{\bbeta} - \bbeta^*)\|}{\kappa\sqrt{n}}\\
& \leq 4\sqrt{s^*}\sqrt{\frac{16\lambda^2s^*}{\kappa^2}}\frac{1}{\kappa}\\
& = \frac{16\lambda s^*}{\kappa^2},
\end{align*}
which leads to \eqref{supp:eq:th12}. This completes the proof.
\end{proof}

\begin{proof}[Proof of Corollary \ref{supp:th:2}]
Theorem \ref{supp:th:1} implies that
\begin{align}
\|\widehat{\bbeta} - \bbeta^*\|_{2,\infty} \leq \frac{8\lambda s^*}{\kappa^2} =a.\label{eq:ineq4}
\end{align}
If $\bbeta_j^*=0$, then $\|\widehat{\bbeta}_{j}\|\leq a$; 
 so that $j\notin \widehat{\mathcal{S}}$. Now consider $\bbeta_j^*\neq 0$. By the $\beta$-min condition, i.e., $\|\bbeta_{j}^*\|>2a$ 
, together with \eqref{eq:ineq4}, it must be true that $\|\widehat{\bbeta}_j\|>a$ 
, so that $j \in \widehat{\mathcal{S}}$. This completes the proof.
\end{proof}

\clearpage
\section{Additional Simulation Results}\label{sec:supp:sim}

We present additional simulation results for various models with the signal to noise ratio (SNR) is set to 2.

\vspace{1cm}

\begin{figure}[h!]
    \captionsetup[subfigure]{singlelinecheck=true}
    \centering
    \subcaptionbox[singlelinecheck=true]{$n = 50, p = 30$}{
        \includegraphics[ width = 2.2in, angle = -90]{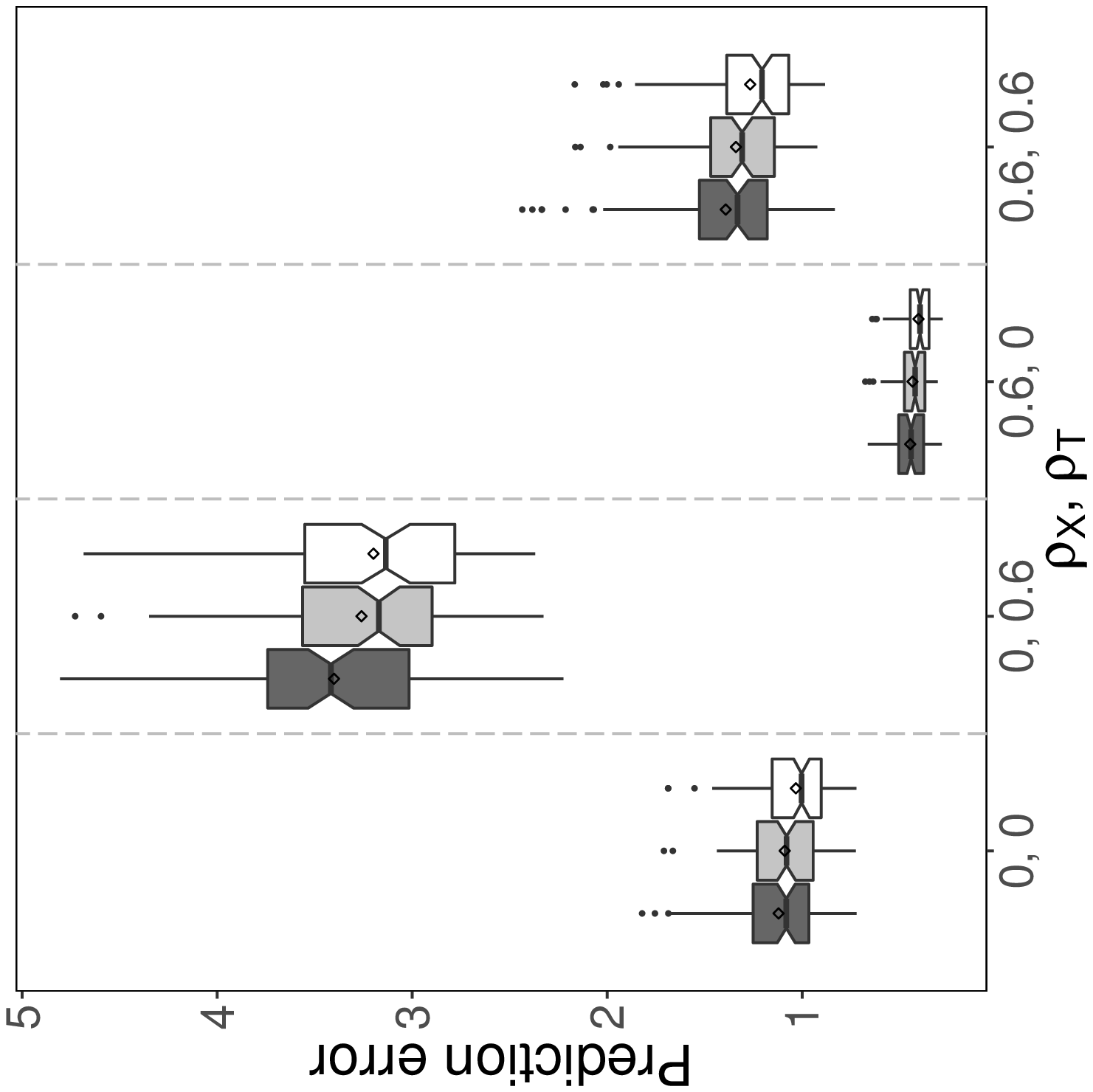}
    }\hspace{0.5cm}%
    \subcaptionbox{$n = 100, p = 30$}{
        \includegraphics[ width = 2.2in, angle = -90]{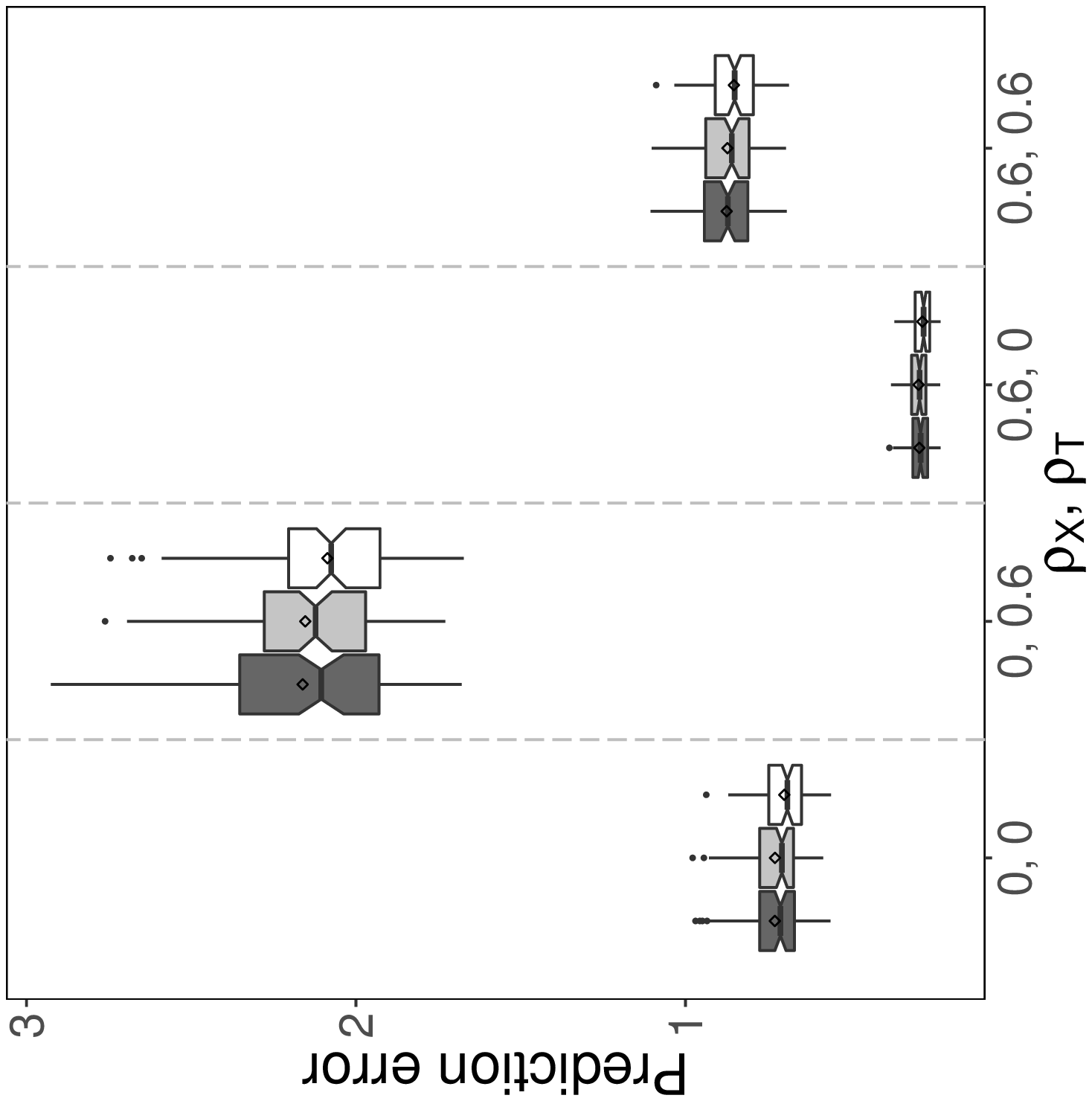}
    }\hspace{0.5cm}%
    \subcaptionbox{$n = 100, p = 100$}{
        \includegraphics[ width = 2.2in, angle = -90]{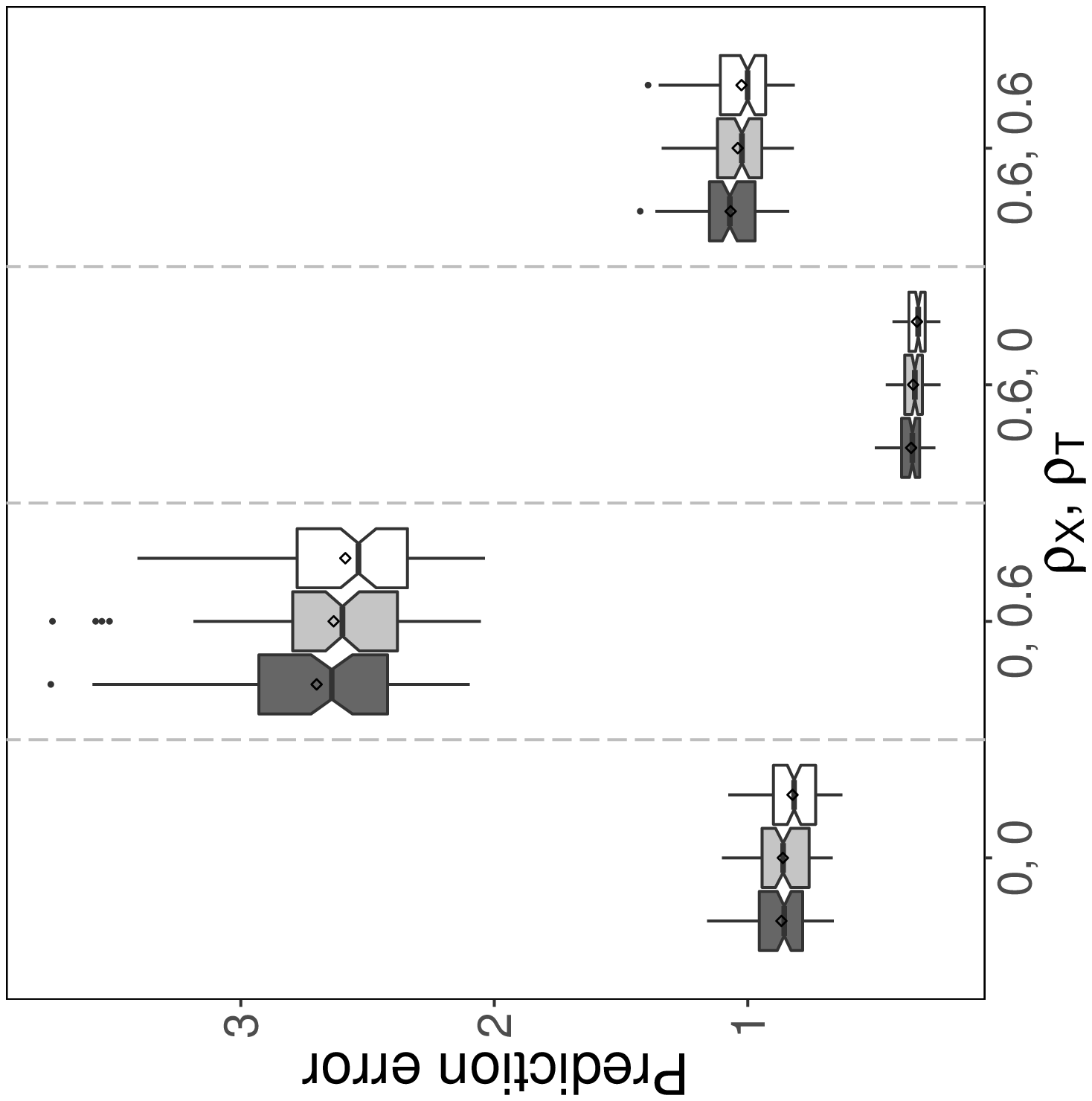}
    }\hspace{0.5cm}%
    \subcaptionbox{$n = 100, p = 200$}{
        \includegraphics[ width = 2.2in, angle = -90]{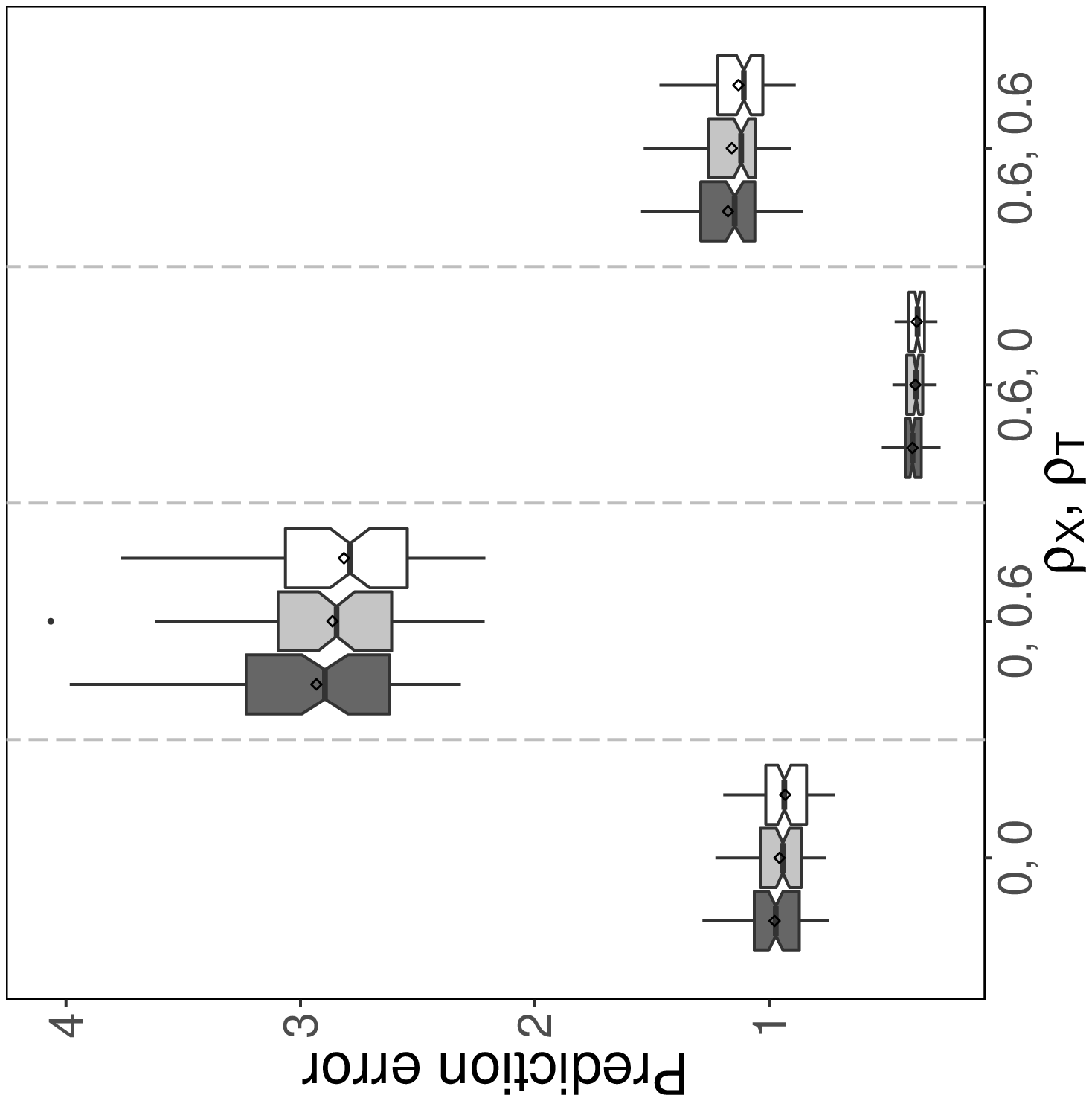}
    }
    \caption{Boxplots of prediction errors for various simulation settings with $\mbox{SNR} = 2$. The layout is the same as in Figure \ref{fig:sim1} of the main paper.
    }\label{fig:sim2}
\end{figure}



\end{document}